\documentclass[a4paper,UKenglish,cleveref,autoref,thm-restate,
runningheads,orivec]{llncs}

\usepackage{balance} % for balancing columns on the final page
\usepackage{amsmath,xspace}
\usepackage{etex}
\usepackage{centernot}
\usepackage{stmaryrd}
\usepackage{wrapfig}
\usepackage{amsfonts}
\usepackage{mathtools}
\usepackage[a4paper, margin=2.5cm]{geometry}

\usepackage{xspace}
\usepackage{color}
\definecolor{lgray}{gray}{0.9}
\usepackage{fixltx2e}
\usepackage{fancybox}
\usepackage{ifthen}
\usepackage{pgf}
\usepackage{tikz}
\usetikzlibrary{arrows,automata}
\usepackage{graphics}
\usepackage{subfig}
\usepackage{proof}
\usepackage{hyperref}
\usepackage{soul}
\usepackage{accsupp} % for ensuring the right Unicode codepoint upon pasting
\usepackage{cancel}
\usepackage{textcomp}
\usepackage{calrsfs}
\usepackage{paralist}
\usepackage{enumitem}

\usepackage{changebar}
\usepackage{todonotes}
\usepackage[super]{nth}
%% !TEX root = ../aamas23.tex

\DeclareMathAlphabet{\pazocal}{OMS}{zplmf}{m}{n}
\newcommand{\mcal}[1]{\pazocal{#1}}

\newcommand{\rulename}[1]{\mbox{\textsc{#1}}}

\newcommand{\qt}[1]{``{#1}"}

\newcommand{\rTo}[1]{\xrightarrow{\ #1}}

\newcommand{\RTO}[2]{\xrightarrow{{#1}}\hspace{-1.2mm}\textsuperscript{${#2}$}}

\newcommand{\trns}[2]{(\RTO{\tau}{\star})_{{#1}}^{{#2}}}

\newcommand{\true}{{\sf true}}
\newcommand{\false}{{\sf false}}

%The quantifier itself, a reversed sans-serif N
%(by analogy with the reversed E of `exists'
%and the reversed A of `forall')

%\newcommand{\note}[1]{\marginpar{\textcolor{red} {#1}}}

\newlength{\arrow}
\settowidth{\arrow}{\scriptsize$1000$}

\newcounter{sqindex}

 \newcommand{\rom}[1]{ \textup{(\lowercase\expandafter{\romannumeral#1})}}

\usepackage[ruled]{algorithm2e} % For algorithms

\SetAlFnt{\small}
\SetAlCapFnt{\small}
\SetAlCapNameFnt{\small}
\SetAlCapHSkip{0pt}
\IncMargin{-\parindent}
%%%%%%%%%%%%%%%%%%%%%%%%%%%%%%%%%%%%%%%%%%%%%%%%%
%% Temporal Logic Symbols.
%% %%%%%%%%%%%%%%%%%%%%%%%%%%%%%%%%%%%%%%%%%%%%%%%

%% Semantically named commands for LTL operators:

%%% Strict versions:

%%% Letter-like LTL operators:
\newcommand \Until      {{\mathbin{\mcal{U}}\kern-.1em}}
\newcommand \Release     {{\mathbin{\mcal{R}}\kern-.1em}}

\newcommand \Since      {\mathbin{\mcal{S}\kern-.08em}}

\newcommand \X      {{\mathsf{X}}}
\newcommand \g    {{\mathsf{{G}}\kern.08em}}
\newcommand \f    {{\mathsf{{F}}\kern.08em}}
%%% Strict versions:
\newcommand \UntilHat   {\mathbin{\LTLhat{\mcal{U}}\kern-.1em}}

\newcommand \SinceHat   {\mathbin{\LTLhat{\mcal{S}}\kern-.08em}}

%%% FOL operators:
\newcommand \Not        {\mathopen{\neg}}
\renewcommand \And      {\mathbin{\wedge}}
\newcommand \Or         {\mathbin{\vee}}

%%% LTL syntactic sugar:

%% Other math-mode abbreviations:

%% Text-mode commands:
\newcommand \ltl        {\textsc{ltl}\xspace}

%%%%%%%%%%%%%%%%%%%%%%%%%%%%%%%%%%%%%%%%%%%%%%%%

%%%%%%%%%%%
% dSynMA macros
%%%%%%%%%%%%%%

%%%%%%%%%%%%%%%%%%%%%%%%%%%%%%%%%%%%%%%%%%%%%%%%
\newcommand{\set}[1]{\{{#1}\}}
\newcommand{\conf}[1]{\langle{#1}\rangle}

\newcommand{\band}[3]{\bigwedge_{#1}^{#2}{#3}}
\newcommand{\bor}[3]{\bigvee\limits_{#1}^{#2}{#3}}
\newcommand{\bcup}[3]{\bigcup_{#1}^{#2}{#3}}

\newcommand{\tuple}[1]{\left(#1\right)}

\def\<#1>{\mathinner{\langle#1\rangle}}

\newcommand{\chan}{ch}
\newcommand{\schan}{\rulename{ch}} %set

\newcommand{\id}{i}
 %set

%\newcommand{\ch}{{{ch}}}
%\newcommand{\sch}{{\rulename{ch}}} %set

 %set

\newcommand{\obsfun}{{\bf x}}
\newcommand{\outfun}{{\bf o}}
 %set
\newcommand{\projkx}{\mathsf{\bf proj}^x_k}
\newcommand{\projko}{\mathsf{\bf proj}^o_k}

\newcommand{\projkch}{\mathsf{\bf proj}^{\chan}_k}

\newcommand{\size}[1]{|{#1}|}
\newcommand{\Exp}[1]{2^{#1}}

%Lower bound

%Example

\newcommand{\msf}[1]{\mathsf{#1}}

%%ENVIRONMENTS
%\newtheorem{remark}{Remark}
%\newtheorem*{doublecolumn}{}

%\newcommand{\comment}[1]{}

% agents

% common variables
\newcommand{\typecvar}{{\scriptstyle@\msf{type}}}
\newcommand{\assigncvar}{{\scriptstyle@\msf{asgn}}}
\newcommand{\readycvar}{{\scriptstyle@\msf{rdy}}}
\newcommand{\lnkcvar}{{\scriptstyle@\msf{lnk}}}

%\newcommand{\cvar}{\dot{\rulename{c}}\rulename{var}}
%\newcommand{\dvar}{\rulename{dvar}}
%\newcommand{\lvar}{\ddot{\rulename{s}}\rulename{tart}}
%\newcommand{\vallue}{{\tt start}}

% data variables

% line variables

% variable values

\newcommand{\listen}{\rulename{ls}}

\usepackage{fontawesome}
%%%%%%%%%%%%%%%%%%%%%%%%%%%%%%%%%%%%%%%%%%%%%%%%%%%%%%%%%%%%%%%%%%%%%%%%
\usepackage[pdf]{graphviz}

\usepackage[inkscapeformat=pdf]{svg}

\usepackage{apxproof} % Move proofs to appendix

\newtheoremrep{theorem}{Theorem}
\newtheoremrep{lemma}{Lemma}
\newcommand{\BibTeX}{\rm B\kern-.05em{\sc i\kern-.025em b}\kern-.08em\TeX}

\begin{document}
\title{Correct-by-Design Teamwork Plans for Multi-Agent Systems\thanks{This work is funded by
 the Swedish research council
grant: SynTM (No.
2020-03401) (Led by the first author) and the ERC consolidator grant
D-SynMA (No. 772459)(Led by the second author).}
}
\titlerunning{Correct-by-Design Teamwork Plans for Multi-Agent Systems}
% If the paper title is too long for the running head, you can set
% an abbreviated paper title here
%
\author{
	Yehia Abd
	Alrahman%\orcidID{0000-0002-4866-6931}
	\and
	Nir Piterman%\orcidID{0000-0002-8242-5357}
}
\authorrunning{Y. Abd Alrahman and N. Piterman}
% First names are abbreviated in the running head.
% If there are more than two authors, 'et al.' is used.
%
\institute{University of Gothenburg, Gothenburg, Sweden\\
	\email{\{yehia.abd.alrahman,nir.piterman\}@gu.se}}
\maketitle              % typeset the header of the contribution

\begin{abstract}
 We propose \emph{Teamwork Synthesis}, a version of the
     distributed synthesis problem with application to teamwork
     multi-agent systems.
     We reformulate the distributed synthesis question by dropping the
     fixed interaction architecture among agents as input to the
     problem.
     Instead, our synthesis engine tries to realise the goal given the
     initial specifications; otherwise it automatically introduces
     minimal interactions among agents to ensure distribution.
     Thus, teamwork synthesis mitigates a key difficulty in deciding
     algorithmically how agents should interact so that each obtains
     the required information to fulfil its goal.
     We show how to apply teamwork synthesis to provide 
     a distributed solution.

\end{abstract}

% !TEX root = ../aamas23.tex
\section{Introduction}\label{sec:intro}
Synthesis~\cite{PnueliR89} of correct-by-design multi-agent systems is still one of the most intriguing challenges in the field.
Traditionally, synthesis techniques targeted \emph{Reactive Systems}~--~ systems that maintain continuous interactions with hostile environments.
A synthesis algorithm is used to automatically produce a \emph{monolithic} reactive system that is able to satisfy its goals no matter what the environment does.
%This eliminates the need for manual implementations (which are hard to construct correctly), and thus it has the potential to revolutionise  system development.
Synthesis algorithms have been also extended for other domains, e.g., to support rational environments~\cite{KupfermanS22}, cooperation~\cite{MajumdarPS19,EhlersKB15}, knowledge~\cite{JonesKPL12}, etc.

A major deficiency of traditional synthesis algorithms is that they produce a monolithic program, and thus fail to deal with distribution~\cite{FinkbeinerS05}. In fact, the distributed synthesis problem is undecidable, except for specific configurations~\cite{PnueliR90,FinkbeinerS05}. This is disappointing when the problem we set out to solve is only meaningful in a vibrant distributed domain, such as multi-agent systems.

In this paper, we mount a direct attack on the latter, and especially Teamwork Multi-Agent Systems (or Teamwork MAS)~\cite{NairTM05,PynadathT03}.
Teamwork MAS consist of a set of autonomous agents that share an execution context in which they collaborate to achieve joint goals. They are a natural evolution of reactive systems, where an agent has to additionally collaborate with team members to jointly maintain correct reactions to inputs from the context. Thus, being reactive requires being prepared to respond to inputs coming from the context and interactions from the team.

The context is uncontrolled and can introduce uncertainties for individuals that may disrupt the joint behaviour of the team. For instance, a change in sensor readings  of agent$_k$ that some other agent$_j$ cannot observe, but is required to react to, etc. Thus, maintaining correct (and joint) reactions to contextual changes requires a highly flexible coordination structure~\cite{Tambe97}.
This implies that fixing all interactions within the team in advance is not useful, simply because the required level of connectivity changes dynamically.

Despite that flexible coordination mechanisms are undeniably effective to counter uncertainties, the literature on distributed synthesis and control is primarily focused on fixed coordination, e.g., Distributed synthesis~\cite{PnueliR90,FinkbeinerS05}), Decentralised supervision~\cite{Thistle05,ramadge89}, and Zielonka
synthesis~\cite{Zielonka87,GenestGMW10}. This reality, however, is due
to the fact that there is no canonical model to describe distributed
computations, and hence the focus is on well-known models with fixed
structures.
It is widely agreed that the undecidability result is mainly due to
partial (or lack of) information. The latter can also be rephrased as
\qt{lack of coordination}.
Note that the decidability of a distributed synthesis problem is conditioned on the right match between the given concurrency model and its formulation~\cite{Muscholl15}.

We are left in the middle of these extremes: Distributed synthesis~\cite{PnueliR90,FinkbeinerS05},
Zielonka
synthesis~\cite{Zielonka87,GenestGMW10}, and Decentralised supervision~
\cite{Thistle05}. All are
undecidable except for specific configurations. Zielonka synthesis
 is decidable if synchronising agents are allowed to share their
 entire state, and this produces agents that are exponential
 in the size of the joint deterministic specification.

We propose \emph{Teamwork Synthesis}, a decidable reformulation of the distributed synthesis problem. We reformulate the synthesis question by dropping the fixed interaction architecture among agents. Instead, our approach dynamically introduces minimal interactions when needed to maintain correctness.
Teamwork synthesis consider a set of agent interfaces, an environment model that specifies assumptions on the  context and (possibly) partial interactions among agents, and a formula over the joint goal of the team within the context. A solution for teamwork synthesis is a set of reconfigurable programs, one per agent such that their dynamic  composition satisfies the formula under the environment model.

The contributions in this paper are threefold: \rom{1} we introduce the Shadow transition system (or Shadow TS for short) which distills the essential features of reconfigurable multicast from CTS~\cite{AbdAlrahmanP21}, augments, and disciplines them to support teamwork synthesis; \rom{2} we propose a novel parametric bisimulation that is able to abstract unnecessary interactions, and thus helps producing Shadow TSs with least amount of coordinations, and with size that is, in the worst case, equivalent to the joint deterministic specification. This is a major improvement on the Zielonka approach and with less coordination;  \rom{3}  lastly, we present teamwork synthesis and  show how
     to reduce it to a single-agent synthesis. The
     solution is used to construct an equivalent loosely-coupled distributed one. Our synthesis engine will try realise the goal given the initial
specifications, otherwise it will
automatically introduce additional required interactions among agents
to ensure distributed realisability. Note that those additional
interactions are strategic, i.e., they are introduced dynamically when
needed and disappear otherwise.
Thus, teamwork synthesis will enable us to mitigate a key difficulty in deciding algorithmically how agents should interact so that each obtains the required information to carry out its functionality.

%The results in this paper are unique, and aspire to unlock distributed synthesis for multi-agent systems for the first time.

The  paper's structure is as follows: In Sect.~\ref{sec:over}, we give an overview on teamwork synthesis. In Sect.~\ref{sec:bck}, we present a short background materials, and later in Sect.~\ref{sec:scen}, we present a case study to illustrate our approach. In Sect.~\ref{sec:model}, we present the Shadow TS and the corresponding bisimulation. In Sect.~\ref{sec:synth}, we present teamwork synthesis and in Sect.~\ref{sec:conc}, we report our concluding remarks.

% !TEX root = ../aamas23.tex
\section{Teamwork Synthesis in a nutshell}\label{sec:over}
We consider a team of $K$ autonomous agents that execute in a shared context, and pursue a joint goal. A context can be a physical space or an external entity that may impact the joint goal.

Interaction among team members is established based on a set of channels (or event names), denoted $Y$ and partitioned among all members. An agent, say agent$_k$, can locally control a subset of event names $(Y_k\subseteq Y)$ \emph{by being responsible of sending all messages with channels from $Y_k$} while other agents may be eligible to receive.

%It may also interact with other team members in response to the team's interaction events, denoted $Y$. Agent$_k$ may own a subset of the interaction events $Y_k\subseteq Y$ in the team. Owning an interaction event $y\in Y$ means that the owner is the initiator (or sender) of the action associated to $y$.
%%Note that the set $Y$ is not (globally) shared, but is rather divided into subsets (one per agent).
We assume that every agent$_k$, partially observes its context by means of reading local sensor observation values $x_k \in X_k$.
Moreover, agent$_k$ may react to new inputs from $X_k$ or messages (with channels  from other agents, i.e., in $(Y\backslash Y_k)$) by generating local actuation signals $o\in O_k$. That is, the signals agent$_k$ uses to control its state, e.g., a robot sends signals to its motor to change direction.

Message exchange is established in a reconfigurable multicast fashion. That is, agent$_k$ may send messages to interested team members, i.e., agents that currently listen to the sending channel. A receiving agent, agent$_j$ for $j\neq k$, can adjust its actuation signals $O_j$  accordingly. Agents can connect/disconnect channels dynamically based on need. An agent  only receives messages on channels that listens to in its current state, and cannot observe others. %We use $\langle X_k,\ Y_k,\ O_k\rangle$ to denote the interface of agent$_k$.

Agent$_k$ starts from a fixed initial state, and in every future execution step it either: observes a new sensor input from $X_k$; receives a message on a channel from $(Y\backslash Y_k)$ that agent$_k$ listens to in the current state;  or sends a message on a channel from $Y_k$ to interested members. In all cases, agent$_k$ may trigger \emph{individual} actuation signals $O_k$ accordingly.

As a team, every team execution starts from a fixed initial state.
%That is, the composite of the initial states of all team members.
Moreover, in every execution step the team either observes an \emph{aggregate sensor input}~--~some members (i.e., a \emph{subset} of $K$) observe an input~--~or exposes a message on channel from $Y$ originated exactly from one member. In both cases, the team may trigger an \emph{aggregate actuation signal} $O$.
Formally, the set of aggregate sensor inputs over $\set{X_k}_{k\in K}$ is $X=\{\obsfun:K \hookrightarrow \bigcup_k X_k ~|~ \obsfun(k) \notin \bigcup_{j\neq k} X_j\ \mbox{and}\ \exists\ k\in K, s.t.\ \obsfun(k)\ \mbox{is defined}\}$. That is, a global observation corresponds to having new sensor values for \emph{some} of the agents. Note that $\obsfun$ is a partial function. Similarly, the set of aggregate actuation signals over $\set{O_k}_{k\in K}$ is $O=\{\outfun:K \hookrightarrow \bigcup_k O_k ~|~ \outfun(k) \notin \bigcup_{j\neq k} O_j\}$. Note that unlike $X$, the set of aggregate output signals $O$ can be empty.

Thus, teamwork synthesis only requires that aggregate observations $X$ and interactions on channels from $Y$ interleave~\cite{pi2} after initialisation $\theta_i$ (i.e., the initial condition), see the assumption automaton $(A)$ below:
\begin{figure}[t]\centering
\includegraphics[scale=.35]{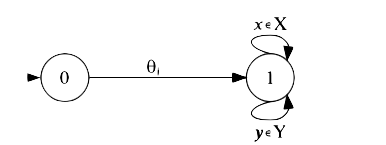}
\vspace{-2.5mm}
\caption{Execution Assumption}\label{fig:execass}
\vspace{-3mm}
\end{figure}

The rationale is that we start from an environment model $E$ that
specifies both aggregate context observations $X$ and (possibly) interactions on
channels from $Y$, i.e., the environment model $E$ may centrally
specify an interaction protocol on channels from $Y$. Then we are given
a set of agent interfaces $\set{\conf{X_k,\ Y_k,\ O_k}}_{k\in K}$ such
that $X$ is the set of aggregate observations over $\set{X_k}_{k\in K}$, 
$O$ is the set of aggregate
actuation signals over $\set{O_k}_{k\in K}$
 as defined before, and $Y=\bcup{k}{}{Y_k}$; and a formula $\varphi$
over the joint goal of the team within $E$ (i.e., the language of
$\varphi$ is in $((Y\cup{X})\times{O})^{\omega}$).

Our synthesis engine will try realise the goal given the initial
protocol description (which can also be empty) on $Y$, and if this is not possible, it will
automatically introduce additional required interactions among agents
to ensure distributed realisability.
% Note that those additional
%interactions are strategic, i.e., they are introduced dynamically when
%needed and disappear when no longer needed. Thanks to the dynamic
%nature of reconfigurable multicast.
%Thus, teamwork synthesis will enable us to mitigate a key difficulty in deciding algorithmically how agents should interact so that each obtains the required information to carry out its functionality. 
We use the Shadow TS, with essential features of reconfigurable multicast, as the underlying distributed model for teamwork synthesis.

Formally, a solution for \emph{teamwork Synthesis} $\mathcal{T}=\langle E\cap A,\ \varphi,\ O\rangle$ is a set of $\size{K}$-Shadow TSs, one for each $\conf{X_k,\ Y_k,\ O_k}$ such that their team composition satisfies $\varphi$ under $E\cap A$, where $E\cap A$ is the standard automata intersection of $E$ and the execution assumption $A$ depicted above.
We show that the teamwork synthesis problem can be reduced to a single-agent synthesis. The solution of the latter can be efficiently decomposed into a set of equivalent shadow TSs.

% !TEX root = ../aamas23.tex
\section{Background}\label{sec:bck}
We present the background material on symbolic automata for environment's specifications and linear
temporal logic (\ltl).
%We use deterministic symbolic automata to model the environment. Formally,
\begin{definition}[Environment model]\label{def:env}
An environment model $E$ is a deterministic symbolic automaton of the form $E=\langle Q,\ \Sigma,\ \Psi,\ q_0,\ \rho\rangle$,
\begin{itemize}[label={$\bullet$}, topsep=2pt, itemsep=2pt, leftmargin=15pt]
\item $Q$ is a set of states and $q_0\in Q$ is the initial state.
\item $\Sigma$ is a structured alphabet of the form $(Y\cup{X})$.
\item $\Psi$ is a set of predicates over $\Sigma$ such that every predicate $\psi\in\Psi$ is interpreted as follow: $\llbracket\cdot\rrbracket:\Psi\rightarrow {(Y\cup{X})}$.
\item $\rho:Q\times\Psi \rightarrow Q$ is the transition function, s.t. for all transitions $(q,\psi,q'), (q,\psi',q'')\in\rho$,  if $\psi\wedge\psi'$ is satisfiable then $q'=q''$.
%Recall that the interpretation of a predicate $\psi\in\Psi$ is restricted to $(Y\cup{X})$. That is, a transition $(q,\psi,q')\in\rho$ is \emph{only} enabled either due to a scheduled interaction on channel $y\in Y$ or due to reading a new input letter $x\in{X}$ from the context.

\end{itemize}
\end{definition}
%The environment model above abstracts the context that a team of agents share using the input alphabet $({X})$. It also uses a set of channels $(Y)$ to schedule interactions among team members. The idea is that the team observes its context by reading input from the context alphabet ${X}$ while agents interact among each other by exchanging messages on channels from $Y$.
The language of $E$, denoted by  $\mathcal{L}_E$, is a set of infinite
sequences of letters in $({Y\cup{X}})^{\omega}$.
Two environment models $E_1$ and $E_2$ can be composed by means of standard automata intersection ($E_1\cap E_2$).

For goal specifications, we use \ltl to specify the goals of individual agents and their joint goals.
We assume an alphabet of the form $(Y\cup{X})\times{O})$ as defined before.
A model $\sigma$ for a formula $\varphi$ is an infinite sequence of letters in $(Y\cup{X})\times{O})$, i.e., it is in $((Y\cup{X})\times{O}))^{\omega}$.
Given a model $\sigma = \sigma_0, \sigma_1, \ldots$, we denote by
$\sigma_i$ the letter at position $i$.
%We refer to sets of models also as a \emph{language}.

LTL formulas are constructed using the following grammar.
%\begin{equation}
\[
\varphi ::= v\in \bigcup_k (X_k\cup O_k)  ~|~y \in Y~|~
\neg \varphi ~|~ \varphi_1 \Or \varphi_2 ~|~
\X \varphi ~|~
\varphi_1\ \Until\ \varphi_2\
%~|~
%\varphi_1\ \WaitFor\ \varphi_2
\]
%\end{equation}

\noindent
For a formula $\varphi$ and a position $i\geq 0$,
$\varphi$ \emph{holds at position $i$ of $\sigma$}, written
$\sigma,i \models \varphi$, where $\sigma_i=(v,\outfun)$, if:
\begin{itemize}[label={$\bullet$}, topsep=2pt, itemsep=2pt, leftmargin=15pt]
\item
For $x_k \in X_k$ we have $\sigma,i \models x_k$ iff
$v\in X$ and and $v(k)=x_k$.
That is, $x_k$ is
satisfied if $v(k)$ is defined and equal to $x_k$.\footnote{
	It is possible to say $v(k)$ is defined and \emph{not} equal to $x_k$
	by $\bigvee_{x_k\neq x\in X_k} x$.
}
\item
For $o_k \in O_k$ we have $\sigma,i \models o_k$
iff $\outfun(k)=o_k$
\item
For $y \in Y$ we have $\sigma,i \models y$ iff $v=y$
\item $\sigma,i \models \neg \varphi $ iff $\sigma,i \not\models
\varphi$
\item $\sigma,i \models \varphi \Or \psi$ iff $\sigma,i \models
  \varphi$ or $\sigma,i \models \psi$
\item $\sigma,i \models\ \X \varphi$ iff $\sigma,i+1 \models
  \varphi$
\item $\sigma,i \models \varphi\ \Until\ \psi$ iff there exists $k\geq i$
  such that $\sigma,k \models \psi$ and $\sigma,j \models \varphi$
  for all $j$, $i \leq j < k$
\end{itemize}
If $\sigma,0 \models \varphi$, then $\varphi$
\emph{holds} on $\sigma$ (written $\sigma \models \varphi$). A
set of models $M$ satisfies $\varphi$, denoted $M \models \varphi$, if
every model in $M$ satisfies $\varphi$.
A formula is \emph{satisfiable} if the set of models satisfying it is
not empty.
%An agent\footnote{Note that the composition of multiple agents is also
%an agent.} satisfies $\varphi$ if its set of traces
%satisfies $\varphi$.

We use the usual abbreviations of the Boolean connectives
$\And$, $\rightarrow$, and $\leftrightarrow$ and the usual definitions
for $\true$ and $\false$.
We introduce the following temporal abbreviations $\f\phi=\true\
\Until\ \phi$,
$\g\phi=\neg F \neg \phi$, and
$\phi_1\Release\phi_2=\neg (\neg \phi_1 \Until \neg \phi_2)$.
% !TEX root = ../aamas23.tex
\section{Distributed Product Line Scenario}\label{sec:scen}%\todo{in progress}
We use a distributed product line scenario to illustrate \emph{Teamwork Synthesis} and its underlying principles.

The product line, in our scenario, is operated by three robot arms: \rom{1} the $\rulename{tray}$ arm that observes inputs on the input-tray and forwards them for processing; \rom{2} the $\rulename{proc}$ arm that is responsible for processing the inputs; \rom{3} and the $\rulename{pkg}$ arm that packages and delivers the final product.

The operator of the product line is an \emph{uncontrollable} human, adding inputs, denoted by $(\msf{in})$, to the input-tray. The operator serves as the execution context in which the three robot arms operate. Only the $\rulename{tray}$ arm can observe the input $(\msf{in})$.

The specifications of the robot arms are as follows:
The interface of the \rulename{tray} is of the form $\msf{Int}_{\rulename{tray}}=\conf{\set{\msf{in}},\set{\msf{fwd}},\set{\msf{rFwd}}}$. That is, the {\rulename{tray} arm can observe the input $\msf{in}$ on the input-tray, it can also send a message on channel $\msf{fwd}$, and it has one actuation signal $\msf{rFwd}$ to instruct its motor to get ready to forward the input. The {\rulename{f}-automaton below specifies its part of the interaction protocol.

{\qquad
\includegraphics[scale=.4]{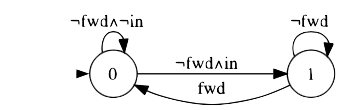}}

That is, the {\rulename{tray} arm can forward by sending a message on $\msf{fwd}$ only after it observes an input $\msf{in}$. The safety goals of the {\rulename{tray} are:\smallskip

$
\begin{array}{l}
\varphi_1=\g \msf{ (in    \rightarrow  rFwd)}\ \&\\[4pt]
\qquad\ \g \msf{((rFwd \And (\X \Not fwd)) \rightarrow (\X rFwd))}\\[4pt]
\end{array}
$

That is, the motor gets ready to forward whenever an input is observed. Moreover, the motor remains ready to forward as long as forwarding did not happen.

The interface of the {\rulename{proc} arm is of the form  $\msf{Int}_{\rulename{proc}}=\break\conf{\emptyset,\set{\msf{proc}},\set{\msf{rProc}}}$ . That is, the {\rulename{proc} arm cannot observe any input, but it can send a message on $\msf{proc}$, and it has one actuation signal $\msf{rProc}$ to instruct its motor to get ready to process the input. The {\rulename{p}-automaton below and the \ltl formula $\rulename{pd}$ specify the arm part in the interaction protocol.

{\qquad
\includegraphics[scale=.4]{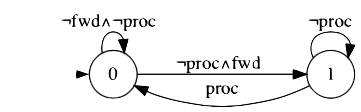}}

\[\rulename{pd}=\g \msf{((proc \And (\X in) \And (\X \X fwd))} \rightarrow\msf{\X (dlv\ R \Not proc))}\]

Namely, the {\rulename{proc} arm can process by sending a message on $\msf{proc}$ only after a forward has happened. Moreover, the arm cannot process twice in row without a deliver in between. We will use $\msf{A}(\rulename{pd})$ to denote the automaton representing \rulename{pd}.

The safety goals of the {\rulename{proc} arm are as follows:\smallskip

$
\begin{array}{l}
\varphi_2=\g \msf{(fwd \rightarrow  rProc)}\ \&\\[4pt]
\qquad\ \g \msf{((rProc \And (\X \Not proc)) \rightarrow (\X rProc))}\\[4pt]
\end{array}
$

That is, the motor gets ready to process whenever forward happens. Moreover, the motor remains ready to process as long as processing did not happen.

 The interface of the {\rulename{pkg} arm is  $\msf{Int}_{\rulename{pkg}}=\conf{\emptyset,\set{\msf{dlv}},\set{\msf{rDlv}}}$ . That is, the {\rulename{pkg} arm cannot observe any input, but it can deliver by sending a message on $\msf{dlv}$, and it has one actuation signal $\msf{rDlv}$ to instruct its motor to get ready to package and deliver the input. The {\rulename{d}-automaton below specifies its part of the interaction protocol.

{\qquad
\includegraphics[scale=.4]{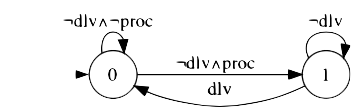}}

The {\rulename{pkg} arm can send a message on $\msf{dlv}$ only after processing has happened. The safety and liveness goals of the {\rulename{pkg} arm are:\smallskip

$
\begin{array}{l}
 \varphi_3=\g \msf{(proc \rightarrow  rDlv)}\ \&\\[4pt]
\qquad\ \g \msf{((rDlv \And (\X \Not dlv)) \rightarrow (\X rDlv))}\ \\[4pt]
\end{array}
$

That is, the motor gets ready to deliver  whenever process happens. Moreover, the motor remains ready to deliver as long as delivering did not happen. We also require $\g \f \msf{(rDlv)}$, i.e., the motor must also be ready to delivering infinitely often.

We have the following assumption on the operator $\rulename{op}$:

 {\qquad
\includegraphics[scale=.4]{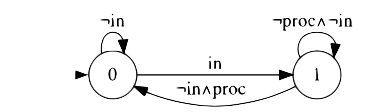}}

%\[ \g \f \msf{(in)}\\[4pt]\]
Namely, after a first input the operator waits for processing to happen before it puts a new input.

Finally, we require $\g \f \msf{(in)}$, i.e., the operator must supply input infinitely often.

We assume that all signals are initially off. That is:\smallskip

$\begin{array}{c}
\theta=\neg\msf{in}\wedge\neg\msf{fwd}\wedge\neg\msf{proc}\wedge\neg\msf{dlv}\wedge\neg\msf{rFwd}\wedge\neg\msf{rProc}\wedge\neg\msf{rDlv}\\[4pt]
\end{array}
$\smallskip

Notice that these specifications are written from a central point of
view. For instance, the formula $\rulename{pd}$ of the  \rulename{proc}
arm predicates on ($\msf{in},\ \msf{fwd},\ $ and $\msf{dlv}$) even if
it cannot observe them. To be able to enforce this formula, we need to
be able to automatically introduce strategic and minimal interactions
among agents at run-time, only when needed (!), and this is the role of
teamwork synthesis.

The instance of teamwork synthesis $\mathcal{T}=\langle E\cap A,\ \varphi,\ O\rangle$ is:
\[
\begin{array}{cl}
\rom{1} & E=\rulename{f}\cap\rulename{p}\cap\rulename{d}\cap \msf{A}(\rulename{pd})\cap\rulename{op}\\[4pt]
\rom{2} & A=\mbox{is an instance of the automaton depicted in Fig.~\ref{fig:execass}}\\[4pt]
\rom{3} &  \varphi=\theta\wedge\varphi_1\wedge\varphi_2\wedge\varphi_3\wedge (\g\f(\msf{in})\rightarrow\g\f(\msf{rDlv}))\\[4pt]
\end{array}\]
\[
\begin{array}{cl}
\rom{4} & O= \left \{
\begin{array}{c}
	\emptyset,
	\set{(\rulename{tray}\mapsto \msf{rFwd})},\\
	\set{(\rulename{proc}\mapsto \msf{rProc})},
	\set{(\rulename{pkg}\mapsto \msf{rDlv})},\\
	\set{(\rulename{tray}\mapsto \msf{rFwd}), (\rulename{proc}\mapsto \msf{rProc})},\\
	\set{(\rulename{tray}\mapsto \msf{rFwd}), (\rulename{pkg}\mapsto \msf{rDlv})},\\
	\set{(\rulename{proc}\mapsto \msf{rProc}), (\rulename{pkg}\mapsto \msf{rDlv})},\\
	\set{(\rulename{tray}\mapsto \msf{rFwd}), (\rulename{proc}\mapsto \msf{rProc}),(\rulename{pkg}\mapsto \msf{rDlv})}
\end{array}
\right	\}\\[2ex]
\end{array}
\]
\smallskip

A solution for $\mathcal{T}=\langle E\cap A,\ \varphi,\ O\rangle$ is a $3$-Shadow TSs, one for each $\conf{X_k,\ Y_k,\ O_k}_{k\in\set{1,2,3}}$ such that $T_1\|T_2\| T_3\models\varphi$ under $E\cap A$.

We will revisit the scenario, at the end of Sect.~\ref{sec:synth}, to show the distributed realisation of this problem and its features.
% !TEX root = ../aamas23.tex
\section{Shadow Transition Systems}\label{sec:model}
%We assume that each agent may observe some of the context variables $X$ and may own some of the interaction variables $Y$. Owning an interaction variable $y\in Y$ means that the owner is the sender of the event associated to $y$.
%Moreover, each agent has a set of output variables $O$ used to control its local state.

We formally present the \emph{Shadow Transition System} and we use it to define the behaviour of individual agents. We also define how to compose different agents to form a team.

%Conceptually, we consider a team of agents that share an execution context. We assume that an agent may partially observes its context through a set of (local) uncontrollable variables $X$. An agent may also own some of the interaction variables $Y$ in the team. Owning an interaction variable $y\in Y$ means that the owner is the initiator (or sender) of the action associated to $y$. Note that the set $Y$ is not (globally) shared, but is rather divided into subsets (one per agent). Moreover, an agent reacts to updates in either $X$ or $Y$ by assigning a set of (local) controllable variables $O$. That is, the set of variables that the agent uses to control its state.

%A Team of agents share a set of multicast channels $Y$ that agents use to coordinate their behaviour.
%An agent may send local data to interested agents in the team, i.e., agents that currently listen to the sending channel. A receiving agent can adjust its behaviour accordingly. Agents can connect/disconnect  channels dynamically based on need. An agent only receives data sent on channels that the agent listens to in its current state, and cannot observe the others.
%Formally,
\begin{definition}[Shadow TS]\label{def:shadow} A shadow TS is of the form ${T_k} = \langle S_k,\ \rulename{Int}_k,\ \rulename{Act}_k, \Delta^k_e, \ \Delta_k,\ L_k,\ \listen^k,\ s^k_0\rangle$, where:
\begin{itemize}[label={$\bullet$}, topsep=2pt, itemsep=2pt, leftmargin=10pt]
\item $S_k$ is the set of states of $T_k$ and $s^k_0\in S_k$ its
initial state.

\item $\rulename{Int}_k=\langle X_k,\ \schan_k,\ O_k\rangle$ is the  interface of $T_k$, where
\begin{itemize}
%\item $I$ is a set of uncontrollable input variables (i.e., $T$ can sense) and $O$ is a set of controllable output variables (i.e., $T$ can assign);
\item $X_k$ is an observation alphabet, $\schan_k$ is a set of interaction channels, and $O_k$ is an output (or actuation) alphabet. We use $\id$ to range over elements in $X_k$ or $\schan_k$;
\item $\listen^k: S_k\rightarrow\Exp{\schan_k}$ is a channel listening function. That is, $\listen^k$ defines (per state) the channels that $T_k$ listens to.
\end{itemize}
%We consider that all variables are boolean.

\item $\rulename{Act}_k\subseteq(\schan_k\times\set{!,?}\times\Upsilon_k)$ is the set of messages. Intuitively, a message consists of a channel  $\chan\in\schan_k$, a type (send $!$ or receive $?$), and a  load (or contents) $\upsilon\in\Upsilon_k$.

\item $L_k:
S_k\rightarrow(\schan_k\cup({X_k}\uplus\set{\bot}))\times{(O_k\uplus\set{\bot})}$
 is a labelling function where $\bot$ denotes undefined label, i.e.,
$L_k$ labels states with input (output) letters that were observed
(correspondingly produced).

\item $\Delta^k_e\subseteq S_k\times(\schan_k\cup{X_k})$ denotes the
environment potential moves from $S_k$, i.e., $\Delta^k_e$ can be
thought of as a ghost transition relation denoting the instantaneous
perception of $T_k$ of its environment.

\item $\Delta_k\subseteq S_k\times\rulename{Act}_k\times S_k$ is the transition relation of $T_k$. The relation $\Delta_k$ can be thought of as a shadow transition relation of $\Delta^k_e$. That is, for every potential move in $\Delta^k_e$, there must be a corresponding shadow transition in $\Delta_k$ as follows:

\begin{itemize}

\item For every state $s\in S_k$ and every  letter $i\in(\schan_k\cup{X_k})$, if $(s,i)\in \Delta^k_e$ then there exists $o\in{O_k}, s'\in S_k$ such that
$ L_k(s')=(\id,o)\ \text{and}\ (s,a,s')\in \Delta_k\ \text{for some}\ a\in\rulename{Act}_k$

\item For any state $s\in S_k$, if for every letter $i\in(\schan_k\cup{X_k})$,\ $(s,i)\notin \Delta^k_e$ and there exists $o\in{O_k}, s'\in S_k$ such that $(s,a,s')\in \Delta_k\ \text{for some}\  a\in\rulename{Act}_k$ then $a$ must be a \emph{receive}.

%\item for every input assignment $i\in\Exp{I}$, there exists $s\in S_0$ such that $ L_{\mathsf{in}}(s)=i$.
\end{itemize}
\end{itemize}
%\begin{description}
%\item[Non-blocking of the environment:] that is, for every state $s\in S_U$ and for every environment action $u\in U$, there exists a state $s'$ such that $(s,u,s')\in R$.
%\item[Local environment actions and communication actions are mutually exclusive:] that is,  for every state $s\in S_U$ and for every \emph{send} communication action $(c,!,\datfun)\in C$, we have that  $(s,(c,!,\datfun),s')\notin R$ for any state $s'$. Furthermore, for every state $s\in S_C$ and for every environment action $u\in U$, we have that  $(s,u,s')\notin R$ for any state $s'$.
%\end{description}
\end{definition}

%Given an input sequence $w=\sigma_0,\sigma_1,\sigma_2,\dots$, we call $\pi=s_0,\lambda_0,s_1,\lambda_1,s_2,\lambda_2\dots$ a run of an agent if $s_0$ is initial and for every $\id\geq 0$, we have $R(S_K,\sigma_i,\lambda_i)=s_{i+1}$ for some $\lambda_i\in\Lambda$. During the run, the agent generates the trace $\rho=(\sigma_0, L(s_0,\sigma_0,\lambda_0)),(\sigma_1, L(s_1,\sigma_1,\lambda_1)), (\sigma_2, L(s_2,\sigma_2,\lambda_2)),\dots$.
%A \emph{path} of $A$ is as before a maximal
%sequence of states and transition labels
%$\sigma=s_0,\lambda_0,s_1,\lambda_1,\ldots$ such that $s_0\in S_0$ and
%for every $i\geq 0$ we have $(S_K,\lambda_i,s_{I+1})\in R$.

Shadow TSs can be composed to form a team as in Def.~\ref{def:comp} below. We use $\projkx$ to denote the projection of a team label into $X_k$ of agent$_k$, and similarly for  $\projkch$ and $\projko$, i.e., for projection on $\schan_k$ and $O_K$ respectively. We use $\bot$ to denote that the projection is undefined.

\begin{definition}[Team]\label{def:comp}
Given a set $K=\set{1,\dots,n}$ of shadow TSs  ${T_k} = \langle S_k,\ \rulename{Int}_k,\ \rulename{Act}_k, \Delta^k_e, \ \Delta_k,\ L_k,\ \listen^k,\ s^k_0\rangle$
where $k\in K$, their composition
$\|_k T_k$ is the team
${T} = \langle S,\ \rulename{Int},\ $\rulename{Act}$,\ \Delta_e, \ \Delta,\ L,\ \rulename{ls},\ s_0\rangle$,

\begin{itemize}[label={$\bullet$}, topsep=2pt, itemsep=2pt, leftmargin=10pt]
\item
  $S = (s_1,\dots,s_n)$, \,
  $s_0 = (s_0^1,\dots,s_0^n)$, \, $\rulename{Act} =
  \bcup{k}{}{\rulename{Act}_k}$, \, $\Upsilon = \bcup{k}{}{\Upsilon_k}$

\item $\rulename{Int}=\conf{X,\schan,O}$ such that $\schan = \bcup{k}{}{\schan_k}$,

  $X=\{\obsfun:K \hookrightarrow \bigcup_k X_k ~|~ \obsfun(k) \notin \bigcup_{j\neq k} X_j\}$,

  and   $O=\{\outfun:K \hookrightarrow \bigcup_k O_k ~|~ \outfun(k) \notin \bigcup_{j\neq k} O_j\}$

\item[]\hspace{-3mm} $\Delta =
\left \{\left (
\begin{array}{c}
	(s_1,\ldots, s_n),\\
	(c,!,\upsilon),\\
	(s'_1,\ldots, s'_n)
\end{array} \right )
\left |\hspace{-1mm}
\begin{array}{l}
 \exists k\in\set{1,n}~.~(s_k,(c,!,\upsilon),s'_k)\in \Delta_k \mbox{ and } \forall
 j\neq k .\\
	(1)~ (s_j,(c,?,\upsilon),s'_j) \in \Delta_j \mbox{ and }
	c\in\listen^j(s_j)
	\mbox { or}\\
	(2)~ c\notin \listen^j(s_j) \mbox { and } s'_j=s_j \\
\end{array}\hspace{-3mm}
\right .
\right	\}
$\\[1ex]

\item[]\hspace{-3mm}$\Delta_e =
\left \{\left (
\begin{array}{c}
	(s_1,\ldots, s_n),\id

\end{array} \right )
\left |~
\begin{array}{l}
((s_1,\ldots, s_n),(c,!,\upsilon),(s'_1,\ldots, s'_n))\in \Delta \\
\mbox{for some } (c,!,\upsilon)\in\rulename{Act}\
\mbox{ such that  }\\
 L((s'_1,\ldots, s'_n))=(i,o)\ \mbox{for some}\ o\in {O}
\end{array}
\right .
\right	\}
$\\[1ex]

\item Given a state $(s_1,\ldots, s_n)$, let $x=\bigcup_k\{k\mapsto\projkx(L_k(s_k))~|~$ $\projkx(L_k(s_k))\neq\bot\}$,\  $\chan=\bigcup_k\projkch(L_k(s_k))$, and \
$o=\bigcup_k\{k\mapsto\projko(L_k(s_k))~|~$ $\projko(L_k(s_k))\neq\bot\}$,
then  $L((s_1,\ldots, s_n)) = (x,o)$ if $x\neq\emptyset$ and $(\chan,o)$ otherwise.
In the systems we construct we achieve that $\chan$ is always a unique
value in $\schan$.
\item
  $\listen((s_1,\ldots, s_n)) = \bcup{k}{}{\listen^k(s_k)}$
\end{itemize}
\end{definition}

Note that the composition in Def.~\ref{def:comp} does not necessarily produce a shadow TS. However, our synthesis engine will generate a set of shadow TSs such that their composition is also a shadow TS.

Intuitively, multicast channels are blocking. That is, if there exists an agent$_k$ with a send transition $(s_k,(c,!,v),s'_k)\in\Delta_k$ on channel $c$ then every other parallel agent$_j,s.t. j\neq k$ (that listens to $c$ in its current state, i.e., $c\in\listen(s_j)$) must supply a matching receive transition $(s_j,(c,?,v),s'_j)\in\Delta_j$ or otherwise the sender is blocked. Other parallel agents that do not listen to $c$ simply cannot observe the interaction, and thus cannot block it. We restrict attention to the set of shadow TSs $\mathcal{T}$ that satisfy the following property:
\begin{property}[Local broadcast]\label{pr:brd}
$\forall T\in\mathcal{T}, s\in S_T,\ \text{we have that} 
c\in\listen(s)\ \text{\bf iff}\ (s,(c,?,\upsilon),s') \in \Delta_T$ for
all $c\in \schan$ and $\upsilon\in\Upsilon$.
\end{property}
Thus, a shadow TS cannot block a message send by listening to its channel and not supplying a corresponding receive transition. This reduces the semantics to asynchronous local broadcast. That is, message sending cannot be blocked, and is sent on local broadcast channels rather than a unique public channel ($\star$) as in CTS~\cite{AbdAlrahmanP21}.

A run of $T$ is the infinite sequence $r=s_0a_0s_1a_1s_2\dots$ such that for all $k\geq 0: (s_k,a_k,s_{k+1})\in \Delta$ and $s_0$ is the initial state. %We use $\mathsf{Run_T}$ to denote the set of all runs generated by $T$.
An execution of $T$ is the projection of a run $r$ to \emph{state labels}. That is, for a run  $r=s_0a_0s_1a_1s_2\dots$, there is an execution $w$ induced by $r$ such that $w=L(s_0)L(s_1)L(s_2)\dots$.
We use $\mathcal{L}_T$ to denote the language of $T$, i.e., the set of all executions of $T$. For a specification $\mathsf{spec}\subseteq (({X}\cup\schan)\times{O})^{\omega}$, we say that $T$ satisfies $\mathsf{spec}$ if and only if $\mathcal{L}_T\subseteq\mathsf{spec}$. Note that the key idea in our work is that we use a specification that only refers to \emph{aggregate input and output}, and is totally insensitive to messages. As we will see later, the latter will be used by a synthesis engine to ensure distributed realisability.

\begin{lemma} The composition operator is a commutative monoid.
\end{lemma}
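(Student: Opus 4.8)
The plan is to read $\|$ as an operation on finite families of shadow TSs indexed by disjoint index sets: combining the family $\{T_k\}_{k\in K_1}$ with $\{T_k\}_{k\in K_2}$, where $K_1\cap K_2=\emptyset$, yields the family indexed by $K_1\uplus K_2$, and $\|_k T_k$ of Def.~\ref{def:comp} is the team built from that combined family. This framing is essential, not cosmetic: the composed relation $\Delta$ in Def.~\ref{def:comp} exposes only send-labelled transitions $(c,!,\upsilon)$ and records no receives, so one cannot in general reconstruct a ternary composite by literally re-composing two team objects binarily. Working at the level of the underlying family (equivalently, of the flat $n$-ary rule over the whole index set) sidesteps this. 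Because tupling of states is associative and commutative only up to the canonical reindexing bijection, I will prove the three monoid laws \emph{up to isomorphism of teams}, where an isomorphism is a bijection of state sets preserving $s_0$, the labelling $L$, the listening function $\listen$, and the relations $\Delta$ and $\Delta_e$; the interface data $X,\schan,O,\text{Act},\Upsilon$ are then determined, as they are built by unions and symmetric aggregations over the same index data.

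First, commutativity and associativity both reduce to the corresponding laws of $(\uplus,\emptyset)$ on finite index sets. Every datum produced by Def.~\ref{def:comp} is natural and symmetric in the index set $K$: the state set is the product $\prod_{k\in K}S_k$, the sets $\text{Act}$, $\schan$ and the function $\listen$ are unions over $K$, the aggregate interfaces $X$ and $O$ are defined by a side condition symmetric in $k$, the labelling $L$ is read off coordinatewise, and $\Delta$ uses the single-sender pattern ``$\exists k$ sending, and every $j\neq k$ either receive-matches (clause~1) or, not listening to the channel, idles (clause~2).'' Reindexing $K$ by any bijection therefore transports one composite to another isomorphically. In particular the swap $\{1,2\}\!\to\!\{1,2\}$ gives $T_1\|T_2\cong T_2\|T_1$, and the canonical reassociation of $K_1\uplus K_2\uplus K_3$ gives $(T_1\|T_2)\|T_3\cong T_1\|(T_2\|T_3)$, both interpreted through the combined-family construction. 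Since $\Delta_e$ is defined purely from $\Delta$ and $L$, its preservation under these bijections is automatic.

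For the identity I take $\mathbf{1}$ to be the team of the empty family: a single state $\bullet$ with empty interface ($X=\schan=O=\emptyset$), $\listen(\bullet)=\emptyset$, $L(\bullet)=(\bot,\bot)$, and $\Delta=\Delta_e=\emptyset$. Combining any family with the empty family leaves the index set unchanged up to the bijection $K\uplus\emptyset\cong K$; concretely, in $T\|\mathbf{1}$ the extra coordinate never sends (it has no transitions) and, listening to nothing, always falls under clause~(2) and idles while contributing $\bot$ to every projection, so $(s,\bullet)\mapsto s$ is a team isomorphism $T\|\mathbf{1}\cong T$, and symmetrically on the left. Assembling the three isomorphisms establishes the commutative monoid.

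The \emph{main obstacle} is the associativity of the synchronization rule itself. The tempting shortcut — compose $T_1\|T_2$ first and then compose the resulting team with $T_3$ — fails because the composed relation $\Delta$ retains no receive transitions, so an intermediate team can no longer match a send issued by $T_3$ on a channel one of its members listens to. The resolution, and the crux of the argument, is to show that the set of legal global steps defined by the flat rule over $K=K_1\uplus K_2\uplus K_3$ is literally invariant under how $K$ is bracketed: exactly one global sender is chosen and all remaining agents independently receive-match or idle, a condition that refers only to the unordered collection of components and to $\listen=\bigcup_k\listen^k$, which aggregates associatively. Verifying this invariance carefully — together with the matching behaviour of $L$ and the derived $\Delta_e$ — is where the real work lies.
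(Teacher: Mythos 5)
Your proof is correct and, at its core, follows the same route as the paper: commutativity (and associativity) come from the symmetry of the composition rule in the index $k$ --- the paper's proof phrases this as the quantifications being ``insensitive to the location of $s_k$ in $(s_1,\ldots,s_n)$'' --- and your identity element $\mathbf{1}$ is exactly the paper's sink state $0$ with no outgoing transitions and $\listen(0)=\emptyset$, which cannot influence the composition. Where you go beyond the paper is in the treatment of associativity: you observe that the composed relation $\Delta$ of Def.~\ref{def:comp} records only send-labelled transitions, so a literal nested binary composition $(T_1\|T_2)\|T_3$ would block any send of $T_3$ on a channel that some member of $T_1\|T_2$ listens to (clause~(1) cannot be met since the composite has no receives, and clause~(2) fails since the composite's listening function is the union of its members'), and you repair this by reading $\|$ as the flat operation on disjointly indexed families and proving the three laws up to the canonical reindexing isomorphism. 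The paper's one-line proof silently takes the same stance --- Def.~\ref{def:comp} is stated as an $n$-ary operation on a set of shadow TSs, and the proof never re-composes team objects binarily --- but it does not acknowledge the issue, whereas you make explicit why the family-level reading is forced rather than merely convenient (the paper instead leans on Property~\ref{pr:brd} to guarantee that listening components never block a send inside the flat composition). Both arguments establish the same statement; yours is the more defensible formulation if one asks what ``monoid'' means for an operation whose binary iteration on team objects is not literally well-behaved.
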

\begin{proof}
The proof follows directly by Property~\ref{pr:brd} and the definition of $\|$. There, the existential and universal quantifications on $k$ are insensitive to the location of $s_k$ in $(s_1,\ldots, s_n)$ for $k\in\set{1,n}$. A sink state, denoted by $0$, (i.e., a state with zero outgoing transitions and empty listening function, i.e., $\listen(0)=\emptyset$) is the $\mathsf{Id}$-element of $\|$ because it cannot influence the composition.
\end{proof}

%Consequently, the order of substates in $(s_1,\ldots, s_n)$ is not important. Thus, for two states $s_1$ and $s_2$ we say that $L(s_1)=L(s_2)$ up to commutativity, associativity, and $\mathsf{Id}$-element.

We define a notion of parameterised bisimulation that we use to
efficiently decompose a Shadow TS.
\paragraph{\bf $\mathcal{E}$-Bisimulation}\label{sec:bisim}
Consider the TS  $T$ with a finite state space $S=\set{s_1,\dots,s_n}$ that is composed
with the TS $\mathcal{C}$ (that we call the \emph{parameter TS}). The
latter has a finite state space
$\mathcal{E}=\set{\epsilon_1,\dots,\epsilon_m}$ and will be used as the
basis to minimise the former. That is, TS $\mathcal{C}$ is only agent
that $T$ can interact with.
This is the only bisimulation used in this paper.
When we write bisimulation we mean parameterised
bisimulation.
We first introduce some notations:
\begin{itemize}[label={$\bullet$}, topsep=2pt, itemsep=2pt, leftmargin=10pt]
\item
$(\epsilon\rTo{a}\epsilon')$: a parameter state $\epsilon$ permits
message $a=(c,!,\upsilon)$ iff $\epsilon$ can receive $a$, does not
listen to $c$, or  is the sender. Formally,\smallskip

$
\begin{array}{l}
	\rom{1}  \quad c\in\listen(\epsilon)\ \mbox{and}\ (\epsilon,(c,?,\upsilon), \epsilon')\in\Delta_{\mathcal{E}}\ \mbox{or}\\
	\rom{2}  \quad c\not\in\listen(\epsilon)\ \mbox{and}\ (\epsilon=\epsilon')\ \mbox{or}\\
	\rom{3}  \quad (\epsilon,(c,!,\upsilon), \epsilon')\in\Delta_{\mathcal{E}}
\end{array}
$ \medskip

Note that this item and {Property~\ref{pr:brd}} ensure that message send is autonomous and cannot be restricted by the parameter TS.
\item
$(s\rTo{a!}s')$ : $s$ sends message $a=(c,!,\upsilon)$ iff $(s,(c,!,\upsilon), s')\in\Delta$
\item
%$(s\rToWith{a?}{\epsilon}{\epsilon'}s')$
$(s\rTo{a?}s')$ :  $s$ \emph{receives}
message $a=(c,!,\upsilon)$ and updates iff  $c\in\listen(s)$,
$(s,(c,?,\upsilon), s')\in\Delta$, $L(s)\neq L(s')$. 

\item
$(s\rTo{\tau_a}{}s')$ :  $s$ can
\emph{discard} 
iff $a=(c,!,\upsilon)$,  $c\in\listen(s)$,\ $(s,(c,?,\upsilon),
s')\in\Delta$, $L(s)=L(s')$. Note the state's label did not change by receiving.  We drop
the name $a$ from  $\tau_a$ when $a$ is arbitrary. 

Note that all kinds of receives ($\rTo{\tau_a}\ $  or  $\ \rTo{a?}$) cannot happen without a joint message-send.

\item We use $\trns{\epsilon}{\epsilon'}$ to denote a sequence (possibly empty) of arbitrary discards ($\rTo{\tau_a}$ for any $a$), starting when the parameter state is $\epsilon$ and ending with $\epsilon'$.
We define a family of transitive closures as the minimal relations
satisfying: \rom{1}
$s\ \trns{\epsilon}{\epsilon'}\ s$; and  
%\rom{2} if $s_1\ \trns{\epsilon}{\epsilon''}\ s'$ and $s'\ \trns{\epsilon''}{\epsilon'}\ s_2$ then $s_1\ \trns{\epsilon}{\epsilon'}\ s_2$.
\rom{2} if $s_1\ \trns{\epsilon_1}{\epsilon_2}\ s_2$, $(s_2\rTo{\tau_a}{}s_3)$, $(\epsilon_2\rTo{a}\epsilon_3)$, and $s_3\ \trns{\epsilon_3}{\epsilon_4}\ s_4$ then $s_1\ \trns{\epsilon_1}{\epsilon_4}\ s_4$.

These are the reflexive and transitive closure of
$\rTo{\tau}$ while making sure that also the
parameter supplies the sends that are required.

%We define a family of transitive closures as the minimal relations
%satisfying: \rom{1}
%$s \trns{\epsilon}{\epsilon'}\ s$;
%
%\rom{2} 
%$(s_1,s_1)\in \rToWith{\tau}{\epsilon}{\epsilon}^{\toall}$
%and
%if $(s_1,s')\in \rToWith{\tau}{\epsilon_1}{\epsilon'}$ and
%$(s',s_2)\in \rToWith{\tau}{\epsilon'}{\epsilon_2}$ then
%$(s_1,s_2)\in \rToWith{\tau}{\epsilon_1}{\epsilon_2}$.
%
%These are the reflexive and transitive closure of
%$\rToWith{\tau}{\centerdot}{\centerdot}$ while making sure that also the
%parameter supplies the sends that are required.
\item
We will use $(s\rTo{a})$ when $s$ has $a$ transition, and
$(s\not\rTo{a})$ when $s$ has no $a$ transitions.

%\item $\rTo{\hat{\alpha}}_i=\rTo{\epsilon}_i\cup\rTo{\alpha}_i$.
\end{itemize}

\begin{definition}[$\mathcal{E}$-Bisimulation]\label{def:bisim}
Let the shadow TS $\mathcal{C}$ with finite state space $\mathcal{E}=\set{\epsilon_1,\dots,\epsilon_m}$ be a parameter TS. An $\mathcal{E}$-bisimulation relation $\mathcal{R}$ is a \emph{symmetric} $\mathcal{E}$-indexed family of relations $\mathcal{R}_{\epsilon}\subseteq S\times S$ for $\epsilon\in\mathcal{E}$, such that whenever $(s_1,s_2)\in\mathcal{R}_{\epsilon}$ then $L(s_1) = L(s_2)$, and
for all $a\in\rulename{Act},$ if $\epsilon\rTo{{a}}\epsilon'$ then

\begin{enumerate}
%\item $L(s_1) = L(s_2)$;
\item $\ s_1\rTo{a!} s'_1$  \quad implies\quad  $\exists s'_2,$  $\ s_2\rTo{a!}s'_2$ and $(s'_1,s'_2)\in\mathcal{R}_{\epsilon'}$;

\item $\ s_1\rTo{a?} s'_1$\quad implies\quad $s_2\not\rTo{\tau_a}$\quad and
\[ \left (
\begin{array}{cc}
	\mbox{if} & s_2\rTo{a?}\ \ \mbox{then}\ \ \exists s'_2,\ \ s_2\ \rTo{a?}\  s'_2\  \ \mbox{and}\ \ (s'_1,s'_2)\in\mathcal{R}_{\epsilon'}\\
	\mbox{else} & \exists s'_2,s''_2, \epsilon'',\  \ s_2\
	(\RTO{\tau}{\star})_{\epsilon''}^{\epsilon}\ s''_2\rTo{a?}\ s'_2,\ 
	\mbox{and}\  \ (s'_1,s'_2)\in\mathcal{R}_{\epsilon'}
\end{array} \right )
 \]

\item $\ s_1\rTo{\tau_a} s'_1$\quad implies\quad $s_2\not\rTo{a?}$\quad and
\[ \left (
\begin{array}{cc}
	\mbox{if} & s_2\rTo{\tau_a}\ \ \mbox{then}\ \ \exists s'_2,\ \ s_2\ \rTo{\tau_a}\  s'_2\  \ \mbox{and}\ \ (s'_1,s'_2)\in\mathcal{R}_{\epsilon'}\\
	\mbox{else} & \exists s'_2,\epsilon'',\  \ s_2\
	(\RTO{\tau}{\star})_{\epsilon''}^{\epsilon}\  s'_2\ \  \mbox{and}\
	(s'_1,s'_2)\in\mathcal{R}_{\epsilon'}
\end{array} \right )
 \]
\end{enumerate}

Two states $s_1$ and $s_2$ are equivalent with respect to a parameter state $\epsilon\in\mathcal{E}$, written $s_1\sim_{\epsilon} s_2$, iff there exists an $\mathcal{E}$-bisimulatin $\mathcal{R}$ such that $(s_1,s_2)\in\mathcal{R}_{\epsilon}$. Please note that $\mathcal{R}$ is symmetric.
\end{definition}

Def.~\ref{def:bisim} equates two states with same labelling with respect to the current parameter state $\epsilon$ if: (1) they supply the same send transitions; (2) they supply same receive transitions or one can discard a number of messages  and reach a state in which it can supply a matching receive; (3) is similar to (2) except for the \qt{else} part where one state can supply an arbitrary number of discard (possibly none). In all cases, both states are required to evolve  to equivalent states under the \emph{next} parameter state $\epsilon'$.

Note that case $2$ and $3$ (and their symmetrics) in
Def.~\ref{def:bisim} allow an agent to avoid participating in
interactions that do not affect it.

We use $\sim_0$ to denote the equivalence under the empty parameter $\mathcal{O}$. That is, $\mathcal{O}$ has a singleton sink state $0$. The parallel composition $\|$ in Def.~\ref{def:comp} is a commutative monoid, and $0$ is the $\mathsf{id}$-element. Thus, we have that $(s,0)$ is equivalent to $s$ for all $s\in S$.

We need to prove that $\sim_{\epsilon}$ is closed under the parallel
composition in Def.~\ref{def:comp} within a composite parameter
$\mathcal{C}$. That is, a parameter $\mathcal{C}$ of the form
$\mathcal{C}_1\|\dots\|\mathcal{C}_n$ for some $n$. For a composite
state $\epsilon=(\epsilon_1,\dots,\epsilon_n)$ and $w=\{i_1,\ldots,
i_j\}$, we use $(\epsilon_{i_1},\dots,\epsilon_{i_j})$ to denote a
$w$-cut of $\epsilon$. That is, a projection of
$\epsilon$ on states $(\epsilon_{i_1},\dots,\epsilon_{i_j})$.
Moreover, we use ${\epsilon\backslash p}$ to denote $\epsilon$ without cut $p$.

\begin{theorem}[$\sim_{\epsilon}$ is closed under $\|$] For all states $s_1,s_2$ of a shadow TS, all composite parameter states $\epsilon\in\mathcal{E}$ of the form $\epsilon=(\epsilon_1,\dots,\epsilon_n)$, and all cuts $p$ of length $w\leq n$, we have that:

$\quad s_1\sim_{\epsilon}s_2$\ implies\ $(s_1,p)\sim_{\epsilon\backslash p}(s_2,p)$
\end{theorem}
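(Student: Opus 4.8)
The plan is to produce, for each way of cutting the parameter, an explicit $(\mathcal{C}\backslash p)$-bisimulation that contains the pair $((s_1,p),(s_2,p))$, and then to discharge the three clauses of Def.~\ref{def:bisim} by pushing every proof obligation back onto the assumed $s_1\sim_{\epsilon}s_2$. First I would invoke the Lemma that $\|$ is a commutative monoid to permute the coordinates of $\mathcal{C}$ so that the cut $p$ occupies the last $w$ positions; I may then write $\mathcal{C}=\mathcal{C}'\|\mathcal{C}_p$, $\epsilon=(\epsilon',p)$ with $\epsilon'$ a state of $\mathcal{C}'$ and $p$ a state of $\mathcal{C}_p$, and $\epsilon\backslash p=\epsilon'$. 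The candidate relation is the $\mathcal{C}'$-indexed family
\[
\mathcal{R}_{\epsilon'}=\{\,((s_1,p),(s_2,p))\mid s_1\sim_{(\epsilon',p)}s_2\ \text{and}\ p\ \text{a state of}\ \mathcal{C}_p\,\}.
\]
This family is symmetric because $\sim$ is, and it respects labels: by the labelling clause of Def.~\ref{def:comp} the label of $(s_i,p)$ is fixed by $L(s_i)$ together with the shared $L(p)$, and $s_1\sim_{(\epsilon',p)}s_2$ already forces $L(s_1)=L(s_2)$.

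The technical core is a decomposition lemma relating the moves of the enlarged system $T\|\mathcal{C}_p$ under parameter $\mathcal{C}'$ to the moves of $T$ under the full parameter $\mathcal{C}$. I would show that a permission $\epsilon'\rTo{a}\epsilon''$ of the reduced parameter, together with the way $p$ handles $a$ (a matching receive when $p$ listens on the channel, a stutter when it does not, or a send when $p$ is the source), is precisely a permission $(\epsilon',p)\rTo{a}(\epsilon'',p')$ of the full parameter; and dually that every composite transition $(s_i,p)\rTo{\;\cdot\;}(s_i',p')$ factors, via the send-rule of Def.~\ref{def:comp}, into a component move of $s_i$ and a component move of $p$. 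This classifies each obligation by the \emph{origin} of $a$: (A) $a$ is sent by $T$; (C) $a$ is sent by the external parameter $\mathcal{C}'$; or (B) $a$ is sent by the moved cut $\mathcal{C}_p$. In case (A) a composite send is matched by replaying clause~(1) for $s_1\sim_{(\epsilon',p)}s_2$ and letting $p$ stutter/receive identically on both sides; in case (C) a composite receive or discard is matched by replaying clauses~(2)/(3), again with $p$ moving deterministically and identically on both sides, so the discard closures $\trns{\cdot}{\cdot}$ lift verbatim to the composite.

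The main obstacle is case (B). When the moved cut $\mathcal{C}_p$ is the source of $a$, a step that was a \emph{parameter-send answered by a receive or discard of $s_1$} in the original game—governed by the flexible clauses~(2)/(3), which allow $s_2$ to catch up through a sequence of discards $\trns{\cdot}{\cdot}$ before matching—becomes a \emph{system-send} $(s_1,p)\rTo{a!}(s_1',p')$ in the enlarged game, which clause~(1) insists be matched \emph{atomically}, with no intervening discards. Reconciling these is the delicate point. I would exploit that $\mathcal{C}_p$ is deterministic, so the successor with $p\rTo{a!}p'$ is unique and shared by both sides, and that by Property~\ref{pr:brd} listening on a channel is equivalent to offering the corresponding receive. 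Using these, I would argue that the committed send of $a$ by $p$ leaves $s_1$ and $s_2$ only the genuine alternatives of receiving or stuttering on that one channel, so that the ``if'' branches of clauses~(2)/(3) apply and deliver the required atomic send-match into $\mathcal{R}_{\epsilon''}$.

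I expect the bookkeeping of the parameter state through the closures $\trns{\cdot}{\cdot}$ to be where most of the care is needed: one must verify that any discards a matching state wishes to perform are supplied by $\mathcal{C}'$ and never by the already-committed $\mathcal{C}_p$, so that no ``else'' detour is forced in case (B) and label consistency between $(s_1',p')$ and the chosen match is preserved. Once clause (B) is settled in this way, checking that $\mathcal{R}$ meets all three clauses of Def.~\ref{def:bisim}—and hence that $(s_1,p)\sim_{\epsilon'}(s_2,p)$—is a direct, if somewhat lengthy, case verification, with the empty-cut and full-cut extremes handled degenerately through the identity role of the sink state and $\sim_0$.
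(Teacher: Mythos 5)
Your skeleton is the same as the paper's own proof: you take the identical candidate relation $\mathcal{R}_{\epsilon\backslash p}=\{((s_1,p),(s_2,p))\mid s_1\sim_{\epsilon}s_2\}$, invoke the commutative-monoid lemma to isolate the cut, and then split on where a send originates (the system $T$, the cut $\mathcal{C}_p$, or the residual parameter $\mathcal{C}'$) --- which is exactly the paper's stated ``key idea'' that sends can only originate from within the composition while receives are joint with autonomous sends. (The paper phrases the remainder as an induction on the cut length $w$ ``with respect to the history of the parameter TS'' whereas you treat the cut as a single block; that difference is cosmetic.) Case (A) is unproblematic. The flaw is in case (B), which you correctly single out as the crux but then dispose of with a claim that does not hold.

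You assert that determinism of $\mathcal{C}_p$ and Property~\ref{pr:brd} leave $s_1$ and $s_2$ ``only the genuine alternatives of receiving or stuttering,'' so that the ``if'' branches of clauses (2)/(3) of Def.~\ref{def:bisim} always apply. But Property~\ref{pr:brd} is a per-state condition (listening iff a receive transition exists); it does not force two $\sim_{\epsilon}$-related states to listen to the same channels --- tolerating a mismatch in listening is precisely what the ``else'' branches are for. So nothing excludes the configuration in which the cut $p$ sends $a=(c,!,\upsilon)$, $s_1$ listens to $c$ and takes a label-changing receive $s_1\rTo{a?}s_1'$, while $s_2$ does not listen to $c$ at all and is related to $s_1$ only through the else-detour $s_2\ \trns{\epsilon''}{\epsilon}\ s_2''\rTo{a?}s_2'$, i.e.\ by catching up on messages drawn from the parameter's history. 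In the enlarged game that history is no longer available: by Def.~\ref{def:comp}, the only answer of $(s_2,p)$ to $p$'s committed send is the stutter $(s_2,p)\rTo{a!}(s_2,p')$, whose target carries the label of $s_2$, and since $L(s_2)=L(s_1)\neq L(s_1')$ this target can never be related to $(s_1',p')$; clause (1) admits no intervening discards with which to repair the mismatch. So your case (B) does not close, and the very condition you say ``one must verify'' (that detours are never supplied by the already-committed $\mathcal{C}_p$) is not a bookkeeping check but the actual mathematical content that is missing --- it is what the paper's induction over the parameter's history is meant to carry. The same problem silently infects your case (C) claim that the discard closures ``lift verbatim'': a detour in the original game may discard messages that only $\mathcal{C}_p$'s past could have sent, and the residual parameter $\mathcal{C}'$ cannot re-send them on its behalf. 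Until you supply an argument that such history-dependent pairs either cannot satisfy $s_1\sim_{\epsilon}s_2$ or can still be matched after the cut is absorbed into the system, the proof is incomplete at its critical step.
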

\begin{proof}
It is sufficient to prove that for every composite parameter state $\epsilon$, the following relation:
\[\mathcal{R}_{\epsilon\backslash p}=\set{((s_1,p),(s_2,p))~|~\mbox{for all states}\ s_1,s_2,\ s.t.\ (s_1\sim_{\epsilon}s_2)}\] is a $(\epsilon\backslash p)$-bisimulation.

Recall that $\|$ is a commutative monoid, and thus it is closed under commutativity, associativity, and $\mathsf{Id}$-element. Thus, the rest of the proof is by induction on the length $(w)$ of the projection with respect to the history of the parameter TS. The key idea of the proof is that send actions of the form $(c,!,\upsilon)$ can only originate from within the composition, i.e., can be sent by $s_1$ (or $s_2$) or $\epsilon$. Moreover, a receive action of the form $(c,?,\upsilon)$ can only happen jointly with a corresponding send while the latter is autonomous.
\end{proof}

%We will use the shadow transition system as the underlying implementation model of our synthesis engine.

% !TEX root = ../aamas23.tex
\section{Teamwork Synthesis}\label{sec:synth}

Given an environment model $E$ that specifies both \emph{aggregate} context observations $X$ and scheduled interactions on channels from $Y$, the execution assumption $A$ automaton depicted in Fig.~\ref{fig:execass}, a formula $\varphi$ over the joint goal of the team within $E$ (i.e., $\mathcal{L}(\varphi)\subseteq ((Y\cup{X})\times{O})^{\omega}$), a set of agent interfaces $\set{\conf{X_k,\ Y_k,\ O_k}}_{k\in K}$ such that $X$ is the set of \emph{aggregate} observations of $\set{X_k}_{k\in K}$, $O$ is the set of \emph{aggregate} actuation signals of $\set{O_k}_{k\in K}$ as defined in Sect.~\ref{sec:over}, and $Y=\bcup{k}{}{Y_k}$,
a solution for \emph{teamwork Synthesis} $\mathcal{T}=\langle E\cap A,\ \varphi,\ O\rangle$ is a set of $\size{K}$ of Shadow TSs, one for each $\conf{X_k,\ Y_k,\ O_k}$ such that $T_1\|\dots\| T_k\models\varphi$ under $E\cap A$.

We show that the teamwork synthesis problem can be reduced to a single-agent synthesis. The solution of the latter can be efficiently decomposed into a set of loosely coupled shadow TSs, where their composition is an equivalent implementation.

\begin{theorem} Teamwork Synthesis whose specification $\varphi$ is a $\mathsf{\rulename{gr(1)}}$ formula~\cite{nsyn} of the form $\theta\wedge\g \phi\wedge (\band{\id=1}{n}{\g\f\ \lambda_i}\rightarrow \band{\id=1}{m}{\g\f\ \gamma_i})$ can be solved with effort $\mathcal{O}(m.n.(\size{E\cap A}.\size{O})^2)$, where $\size{E\cap A}$ is the number of transitions in ${E\cap A}$.
\end{theorem}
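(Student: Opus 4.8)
The plan is to collapse teamwork synthesis into a single GR(1) game over an arena built from $E\cap A$ and $O$, solve that game with the classical GR(1) fixpoint algorithm, and then read the complexity bound directly off the size of the arena. The decidability and the bound both follow from the reduction to single-agent synthesis; the genuine difficulty is hidden in the decomposition step at the very end.

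First I would make the reduction to single-agent synthesis concrete. Because $\varphi$ speaks only of aggregate inputs in $(Y\cup X)$ and aggregate outputs in $O$ and is insensitive to the concrete messages exchanged, it suffices to synthesise a single strategy for a team controller that observes aggregate inputs and emits aggregate outputs. I would take $E\cap A$ as the environment (input) player: its transitions, labelled by letters of $(Y\cup X)$, fix the admissible environment moves, while the system (output) player answers with a letter of $O$. A position of the game thus pairs a transition of $E\cap A$ with an output choice, so the number of positions is $N=\mathcal{O}(\size{E\cap A}\cdot\size{O})$. Counting transitions rather than states of $E\cap A$ is exactly what lets the safety conjunct $\g\phi$ (whose $\phi$ may relate consecutive letters through $\X$, as in the scenario) and the initial condition $\theta$ be evaluated on the edges of the arena without any further blow-up.

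Next I would note that $\varphi$ already lies in the GR(1) fragment by hypothesis: $\theta$ pins down the initial positions, $\g\phi$ is a safety restriction that I fold into the transition relation of the arena (marking the positions/edges violating $\phi$ as losing), and the remaining conjunct $\band{\id=1}{n}{\g\f\ \lambda_i}\rightarrow\band{\id=1}{m}{\g\f\ \gamma_i}$ is precisely the GR(1) winning condition with $n$ assumptions and $m$ guarantees. The single-agent problem is realisable iff the system wins from the initial positions, and I would then invoke the standard symbolic GR(1) solver~\cite{nsyn}, whose nested $\mu/\nu$ controlled-predecessor computation iterates over the $m$ guarantees and $n$ assumptions and runs in time $\mathcal{O}(m\cdot n\cdot N^2)$. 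Substituting $N=\mathcal{O}(\size{E\cap A}\cdot\size{O})$ yields the claimed $\mathcal{O}(m\cdot n\cdot(\size{E\cap A}\cdot\size{O})^2)$, and when the game is won the same fixpoint returns a finite-memory winning strategy, which I read as a single (monolithic) Shadow TS whose executions lie in $\varphi$ under $E\cap A$.

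Finally, and this is where I expect the real work to sit, I would turn that one winning strategy into the required family of $\size{K}$ Shadow TSs by projecting it onto each interface $\conf{X_k,\ Y_k,\ O_k}$ and using the $\mathcal{E}$-bisimulation of Def.~\ref{def:bisim}, together with the closure result that $\sim_{\epsilon}$ is preserved by $\|$, to certify that the parallel composition of the projections is language-equivalent to the monolithic solution. The complexity arithmetic is immediate once the arena size is fixed; the delicate part is arguing that minimal interactions on channels in $Y$ can be introduced so that the composed Shadow TSs reproduce exactly the aggregate behaviour of the strategy — neither blocking an autonomous send (which Property~\ref{pr:brd} rules out) nor dropping any winning play. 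The preceding bisimulation machinery is precisely what I would lean on to discharge this obligation, so the main obstacle is correctness of the decomposition rather than the bound itself.
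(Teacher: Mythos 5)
Your proposal follows essentially the same route as the paper: the paper likewise builds the game arena from $E\cap A$ extended with the aggregate outputs $O$ (encoded as a symbolic discrete system whose states correspond to transition--output pairs, hence the $(\size{E\cap A}\cdot\size{O})^2$ factor), folds the safety conjunct $\g\phi$ into the transition relation by pruning unsafe transitions while exempting the initial transitions covered by $\theta$, and hands the remaining $\band{\id=1}{n}{\g\f\ \lambda_i}\rightarrow\band{\id=1}{m}{\g\f\ \gamma_i}$ condition to the standard GR(1) solver of~\cite{nsyn}. The only structural difference is bookkeeping: the paper confines this theorem's proof to the reduction and bound, deferring the strategy-to-Shadow-TS construction and the bisimulation-based decomposition to the subsequent lemmas (Lemma~\ref{lem:mtots} and Lemma~\ref{lem:decomp}), whereas you fold that discussion into the proof itself.
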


%\subsection{From Teamwork Synthesis to $\mathsf{\rulename{gr(1)}}$}\label{sec:gr}
\begin{proof}
We construct $\hat{E}=\langle \hat{Q},\ \hat{\Sigma},\ \hat{\Psi},\ \hat{q_0},\ \hat{\rho}\rangle$ that extends $E\cap A=\langle Q,\ \Sigma,\ \Psi,\ q_0,\ \rho\rangle$ to include the set of \emph{aggregate} signals in $O$. To simplify the notations, we freely use $o\in{O}$ to mean the predicate that characterises it.
The components of $\hat{E}$ in relation to $E\cap A$ are:
\begin{itemize}[label={$\bullet$}, topsep=2pt, itemsep=2pt, leftmargin=15pt]
\item $\ \hat{Q}=Q$,\quad  $\hat{q_0}={q_0}$,\quad $\hat{\Sigma}= \Sigma\times O$
\item We extend the interpretation function $\llbracket\cdot\rrbracket$ to include variables in $O$. That is,  $\llbracket\cdot\rrbracket:\hat{\Psi}\rightarrow (Y\cup{X})\times{O}$
\item
$\hat{\rho}=$

$\quad
\begin{array}{l}
\left \{\left (
\begin{array}{c}
	(q_0,\theta\wedge\theta_i,q)
\end{array} \right )
\left |~
\begin{array}{l}
 (q_0,\theta_i,q)\in\rho
\end{array}
\right .
\right	\}\cup\\[2ex]
\left \{\left (
\begin{array}{c}
	(q,\psi\wedge o,q')
\end{array} \right )
\left |~
\begin{array}{l}
 q\neq q_0,\ (q,\psi,q')\in\rho\ \mbox{and}\ o\in{O}
\end{array}
\right .
\right	\}
\end{array}
$
\end{itemize}
We use the construction above to construct a symbolic fairness-free $\mathsf{\rulename{ds}}$~\cite{nsyn}.
We transform $\hat{E}$ into an equivalent fairness-free symbolic discrete system $\mathsf{\rulename{ds}}$ $\mathcal{D}=\conf{V_d,\rho_d,\theta_d}$ in the obvious way. We use the variables $X'$ where $\size{X'}=\sum_k\lceil\log\size{ X_k}\rceil$ to encode observations, the variables $Y'$ where $\size{Y'}=\log\size{Y}$ to encode channels, the variables $O'$ where $\size{O'}=\sum_k\lceil\log\size{ O_k}\rceil$ to encode outputs, and the variable $\mathsf{st}$\footnote{To simplify the notation we will consider $\mathsf{st}$ to be a non-boolean variable.} to encode the states $\hat{Q}$. That is, $\mathcal{D}=\conf{V_d,\rho_d,\ \theta_d}$, where:

\begin{itemize}[label={$\bullet$}, topsep=2pt, itemsep=2pt, leftmargin=15pt]

\item $V_d=(X'\cup Y'\cup O'\cup\set{\mathsf{st}})$
\item We define $\rho_d(V_d,{V_d}')$ which is a predicate on the current assignment to $V_d$ in relation to the next assignment. We use the primed copy ${V_d}'$ to refer to the next assignment of $V_d$.

$\rho_d=\bor{(q_1,\psi_1,q_2),\ (q_2,\psi_2,q_3)\in {\rho'}}{}{{\begin{array}{c}
\psi_1\wedge(\psi_2)'\wedge (\mathsf{st}=q_1)\wedge(\mathsf{st}'=q_2)
\end{array}}}$
\item $\theta_d=\theta\wedge (\mathsf{st}=q_0)$
\end{itemize}

For a state $(s,q)\in\Exp{(X'\cup Y'\cup O')}\times Q$ where $Q$ is the domain of $\msf{st}$, we say that $(s,q)\models v$ iff $v\in s$. We naturally generalise satisfaction to boolean combination of $(V_d)$ and $({V_d}')$.
Now, given a Teamwork Synthesis problem $\mathcal{T}=\langle \mathcal{D}(\hat{E}),\ \varphi,\ O\rangle$, where $\mathcal{D}=\conf{V_d,\rho_d,\ \theta_d}$, $\varphi$ is a $\mathsf{\rulename{gr(1)}}$ formula divided into a liveness assumption $\band{\id=1}{n}{\g\f\ \lambda_i}$,\  a liveness goal $\band{\id=1}{m}{\g\f\ \gamma_i}$, and a safety goal $\g \phi$, we construct a $\mathsf{\rulename{gr(1)}}$ game $G=\langle V,\ X,\ O,\ \theta_e,\ \theta_s,\ \rho_e,\ \rho_s,\ \varphi_g\rangle$ as follows:

We use $\phi$ in the safety goal to prune any transition in $\rho_d$ with condition on $O$ that is in conflict with $\phi$ ( it is unsafe). That is, we construct ${\rho'_d}$ from ${\rho_d}$ by removing all transitions $t\in{\rho_d}$ such that $\neg (t\rightarrow\phi)$. The initial transitions from state $q_0$ are not subject to check against $\phi$. Clearly, $(\X {\rho'_d})\rightarrow\g \phi$. Moreover, $\rho'_d$ also encodes the environment safety by definition.
Now, our game is as follows: $G=\langle V_d,\ (X'\cup Y'),\ (O'\cup\set{\mathsf{st}}), \true, \theta_d,\ \true,\ \rho'_d,\ \varphi_g\rangle$, where $\varphi_g=\band{\id=1}{n}{\g\f\ \lambda_i}\rightarrow\band{\id=1}{m}{\g\f\ \gamma_i}$
\end{proof}
To support response formulas of the form $\g(x\rightarrow\f y)$ and general \rulename{ltl} safety formulas instead, the complexity is adjusted as follows: $\mathcal{O}((m+g).(n+a).(\size{E\cap A}.\size{O}.\size{\varphi(s)}.\Exp{(a+g)})^2)$, where $m$, $n$ are adjusted by adding the number of response assumptions $a$ and response guarantees $g$ while $\size{\varphi(s)}$ is the size of the safety goal. This is because the disciplined environment model $E\cap A$ will be intersected with the safety goal and each response formula. A response formula can be encoded in a two-state automaton~\cite{PitermanPS06}.

The solution of the $\mathsf{\rulename{gr(1)}}$ game can be used to construct a Mealy machine with interface $\conf{X,\ Y,\ O}$ as defined below:

\begin{definition}[Mealy Machine] A Mealy machine $M$ is of the form $M=\langle Q,\ I,\ O,\ q_0,\ \delta\rangle$, where:
\begin{itemize}[label={$\bullet$}, topsep=0pt, itemsep=0pt, leftmargin=10pt]
\item $Q$ is the set of states of $M$ and $q_0\in Q$ is the  initial state.

\item $I=(X\cup Y)$ is an alphabet, partitioned into a set of \emph{aggregate} sensor inputs $X$ and a set of channels $Y$, and $O$ is the \emph{aggregate} output alphabet.

\item $\delta: Q\times({Y}\cup{X})\rightarrow Q\times{O}$ is the transition function of $M$.

\end{itemize}
%We restrict attention to non-terminating Mealy machines. That is, we require that for every $q\in Q$, $i\in \Exp{I}$, there exists $q'\in Q$, $o\in\Exp{O}$ such that $(q,(i,o),q')\in\delta$.
\end{definition}
%A run of $M$ is the infinite sequence $r=q_0(i_0,o_0)q_1(i_1,o_1)q_2(i_2,o_2)\break\dots$ such that for all $k\geq 0: (q_k,(i_k,o_k),q_{k+1})\in \delta$ and $q_0$ is the initial state. %We use $\mathsf{Run_M}$ to denote the set of all runs generated by $M$.
%A word of $M$ is the projection of a run $r$ to transition labels. That is, the word $w$ induced by $r$ is as follows $w=(i_0,o_0)(i_1,o_1)(i_2,o_2)\dots$.
The language of $M$, denoted by $\mathcal{L}_M$, is the set of infinite sequences $(({Y}\cup {X})\times {O})^{\omega}$ that $M$ generates.

We will use $M$ to build a corresponding shadow TS.
Namely, we construct a language equivalent shadow TS $T$ with set of states $S=\delta$ as shown below. Note that the constructed mealy machine in the previous step has exactly \emph{one outgoing transition from the initial state} $q_0$, i.e., $\size{(q_0,(i,o),q)\in\delta}=1$. This is ensured by the execution assumption $A$ automaton depicted in Fig.~\ref{fig:execass}.

\begin{lemma}[From Mealy to Shadow TS] Given the constructed Mealy machine $M=\langle Q,\ I,\ O,\ q_0,\ \delta\rangle$ then we use a function $f:\delta\rightarrow \rulename{Act}$, a load $\Upsilon$, and channels $\schan$ to construct a shadow TS ${T} = \langle S,\ \rulename{Int},\ $\rulename{Act}$,\ \Delta_e, \ \Delta,\ L,\ \listen,\ s_0\rangle$ with $\size{\delta}$ many states, and $\mathcal{L}_T=\mathcal{L}_M$.
\label{lem:mtots}
\end{lemma}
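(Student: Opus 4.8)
The plan is to define the construction $f:\delta\rightarrow\rulename{Act}$ that turns each Mealy transition into a message, build the shadow TS $T$ with state set $S=\delta$, and then establish language equivalence by exhibiting a bijection between runs of $M$ and runs of $T$ that preserves state labels. The guiding idea is that a state of $T$ \emph{is} a Mealy transition $(q,(i,o),q')\in\delta$, and its label $L$ should record exactly the input letter $i$ and output $o$ that the Mealy machine produced when taking that transition; this is why we need $\size{\delta}$ states and why the unique initial transition from $q_0$ matters. First I would fix, for each agent $k$, an injective encoding of the observation/channel alphabet into loads $\Upsilon$ and channels $\schan$, and define $f$ so that a Mealy transition reading a channel letter $y\in Y$ becomes a send $(c,!,\upsilon)$ on the corresponding channel $c=f(\dots)$, while a Mealy transition reading a sensor input $x\in X$ is modelled as a (local) message whose channel is not listened to by the receiving side, consistent with the intended asynchronous local broadcast semantics of Property~\ref{pr:brd}.

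Next I would spell out the seven components of $T$. The states are $S=\delta$ with a distinguished initial state $s_0$ given by the unique transition $(q_0,(i_0,o_0),q)\in\delta$ (uniqueness guaranteed by the execution assumption $A$). The labelling is $L((q,(i,o),q'))=(i,o)$, so that the state carries precisely the letter of $((Y\cup X)\times O)$ that the Mealy machine emits. The transition relation $\Delta$ connects two states $s=(q,(i,o),q')$ and $s'=(q',(i',o'),q'')$ whenever the target component of $s$ equals the source component of $s'$, labelling the edge with $f(s')\in\rulename{Act}$; the ghost relation $\Delta_e$ and the listening function $\listen$ are then read off from $\Delta$ exactly as required by Def.~\ref{def:shadow}, so that the two defining conditions (every environment move has a shadow transition, and non-sensing moves are receives) hold by construction. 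I would verify that $T$ is a well-formed shadow TS by checking these two clauses case by case on whether the incoming letter $i'$ is a channel in $Y$ (a send) or a sensor input in $X$ (treated as a receive on a non-listened channel).

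For language equivalence, the argument is a straightforward run correspondence. Given a run $q_0\xrightarrow{(i_0,o_0)}q_1\xrightarrow{(i_1,o_1)}q_2\cdots$ of $M$, the sequence of states $s_0 s_1 s_2\cdots$ with $s_t=(q_t,(i_t,o_t),q_{t+1})\in\delta$ is a run of $T$, and by definition $L(s_t)=(i_t,o_t)$, so the induced execution of $T$ is exactly the word generated by the $M$-run; conversely every run of $T$ projects to a run of $M$ because consecutive states must agree on the shared $Q$-component, recovering a legal path in $\delta$. Hence $\mathcal{L}_T=\mathcal{L}_M$. I expect the main obstacle to be the treatment of sensor inputs $x\in X$ within the message-passing framework: since $\rulename{Act}_k\subseteq\schan_k\times\set{!,?}\times\Upsilon_k$ only provides channel-based messages, I must encode an observation as a receive on a channel that no participant listens to (so it cannot block, per Property~\ref{pr:brd}) and argue that this faithfully reproduces the ``observe an input'' step without spuriously synchronising with other agents — this is the point where the second clause of Def.~\ref{def:shadow} (non-environment moves must be receives) and the local-broadcast discipline have to be invoked carefully to ensure that $T$ is genuinely a shadow TS rather than merely a labelled transition system.
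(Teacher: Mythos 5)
Your overall skeleton matches the paper's construction: states are the Mealy transitions ($S=\delta$), the initial state is the unique initial transition, $L((q,(i,o),q'))=(i,o)$, edges connect transitions that agree on the shared $Q$-component and are labelled by $f$ of the \emph{target} state, and language equivalence follows by the evident run correspondence (the paper itself dismisses this last step with ``it is not hard to see''). However, there is a genuine gap in your treatment of observation letters $x\in X$, and it is precisely the point you flagged as the main obstacle. You propose to encode an observation as a \emph{receive} on a channel that nobody listens to. This is inconsistent with the paper's framework on two counts. First, Property~\ref{pr:brd} is an ``iff'': $c\in\listen(s)$ \textbf{iff} $(s,(c,?,\upsilon),s')\in\Delta_T$. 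So a state cannot carry a receive transition on a channel it does not listen to; ``a receive on a non-listened channel'' is not a legal shadow TS. Second, as the paper stresses in the bisimulation section, receives (whether $\rTo{a?}$ or $\rTo{\tau_a}$) can never fire without a joint message-send from another agent, and the monolithic $T$ constructed here has no partner; such transitions could therefore never be exercised consistently with the semantics, and the object you build would not behave as a shadow TS under composition (Def.~\ref{def:comp}).

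The paper resolves the obstacle in the opposite direction: \emph{every} transition of the monolithic $T$ is a send. It sets $\listen(s)=\emptyset$ for all $s$, restricts $\rulename{Act}\subseteq(\schan\times\set{!}\times\Upsilon)$, and defines $f$ so that a channel letter $\id\in Y$ becomes the send $\tuple{\id,!,\emptyset}$, while an observation $\id\in X$ becomes the send $\tuple{\mathsf{ids},!,\id}$ on a \emph{fresh} channel $\mathsf{ids}=\set{k~|~\projkx(\id)\neq\bot}$ (a set of agent identities, drawn from $\schan\subseteq(\Exp{K})\backslash\set{\emptyset}\cup Y$), carrying the observation itself as the load ($\Upsilon\subseteq X\cup\set{\emptyset}$). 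This is the ``shadowing'' idea: the agent autonomously announces the environment move it perceives, rather than passively receiving it. It also corrects your reading of Def.~\ref{def:shadow}: observation steps \emph{are} environment moves, so the first clause applies and any action may shadow them; the second clause (receives only) constrains states with no environment moves, which do not arise here since $\Delta_e$ is defined to cover every transition of $\Delta$. Finally, the send-only, load-carrying discipline is not cosmetic: it is what makes the subsequent decomposition (Lemma~\ref{lem:decomp}) work, where each send on an identity-set channel is split into sends by the observing agents and matching receives by the others — these are exactly the ``strategic interactions'' the synthesis engine introduces. With your receive-based encoding that decomposition has nothing to distribute, so the gap is substantive, not a matter of presentation.
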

\begin{proof}
We construct  $T$ as follows:
\begin{itemize}[label={$\bullet$}, topsep=4pt, itemsep=2pt, leftmargin=10pt]
\item $S=\set{(q,(i,o),q') ~|~ (q,(i,o),q')\in\delta}$

\item $s_0=(q_0,(i,o),q)$ for the unique $q\in Q$, s.t. $(q_0,(i,o),q)\in\delta$

\item $L((q,(i,o),q'))=(\id,o)$

\item $\rulename{Int}=\conf{X, \schan, O}$ where $X=\{\obsfun:K \hookrightarrow \bigcup_k X_k ~|~\ \obsfun(k) \notin \bigcup_{j\neq k} X_j\}$;  $\schan\subseteq(\Exp{K})\backslash\set{\emptyset}\cup Y$, i.e., the maximal set of channels that agent may use to interact, where $K$ is the set of agent identities; and $O=\{\outfun:K \hookrightarrow \bigcup_k O_k ~|~\ \outfun(k) \notin \bigcup_{j\neq k} O_j\}$

%\item $Y'\subseteq(\Exp{K})\backslash\set{\emptyset}\cup Y$. The maximal set of channels that agent may use to interact, where $K$ is the set of agent identities.

\item $\listen(s)=\emptyset$ for all $s\in S$ and $\rulename{Act}\subseteq(\schan\times\set{!}\times\Upsilon)$ where $\Upsilon\subseteq X\cup\set{\emptyset}$. Note that \rulename{Act} is restricted to send messages.

%\item $\Delta=\set{((q,(i,o),q'),a,(q',(i',o'),q'')) ~|~ (q,(i,o),q'),(q',(i',o'),q'')\in\delta\wedge a=f((q',(i',o'),q''))}$.
\item $\Delta = $\\[.1ex]

 $\left \{\left (
\begin{array}{c}
	(q,(i,o),q'),\\
	a,\\
	(q',(i',o'),q'')
\end{array} \right )
\left |
\begin{array}{l}
 (q,(i,o),q')\in\delta,(q',(i',o'),q'')\in\delta\\ \mbox{and}\ a=f((q',(i',o'),q''))
\end{array}
\right .
\right	\}
$\\

\item We use $\projkx(\id)$ to project $\id$ on $X_k$, s.t. function $f$ is:
\[f((s,(i,o),s'))=
\begin{cases}
      \tuple{\id,!,\emptyset} & \mbox{if}\  \id\in Y\\
      \tuple{{\mathsf{ids}},!,\id} &  \mbox{if}\ \id\in {X},\ \mathsf{ids}=\set{k ~|~  \projkx(\id)\neq\bot} \\

   \end{cases}
\]

\item $\Delta_e= 
\left \{\left (
\begin{array}{c}
	(q,
	(i,o),
	q'), i'
\end{array} \right )
\left |
\begin{array}{l}
 (q,(i,o),q'),a,(q',(i',o'),q''))\in \Delta
\end{array}
\right .
\right	\}
$

It is not hard to see that $\mathcal{L}_T=\mathcal{L}_M$.
\end{itemize}

\end{proof}

% We will specialise the set of actions \rulename{Act} in Lemma \ref{lem:mtots}  as follows:
%\begin{itemize}[label={$\bullet$}, topsep=2pt, itemsep=2pt, leftmargin=15pt]
%%\item For actions: we only consider send actions, i.e., with the form $\rulename{Act}=\schan\times\set{!}\times\Upsilon$. This is because the system is only open to the environment interactions but closed for a specific team interactions.
%
%\item $\Upsilon\subseteq{X}$. That is, the possible information that agents may need to communicate.
%
%\item $Y'\subseteq(\Exp{K})\backslash\set{\emptyset}\cup Y$. The maximal set of channels that agent may use to interact, where $K$ is the set of agent identities.
%
%\item We define the function $f:\delta\rightarrow \rulename{Act}$ as follows:
%\[f((s,(i,o),s'))=
%\begin{cases}
%      \tuple{\id,!,\emptyset} & \mbox{if}\  \id\in Y \\
%      \tuple{{\mathsf{ids}},!,\id} &  \mbox{if}\ \id\in {X},\ \mathsf{ids}=\set{k ~|~  \mathsf{\bf proj}_k(\id)\neq\emptyset} \\
%
%   \end{cases}
%\]
%\end{itemize}

\begin{lemma}[Decomposition]\label{lem:decomp}
A shadow TS   ${T} = \langle S,\ \rulename{Int},\ \rulename{Act}, \Delta_e, \ \Delta,\ L,\ \listen,\ s_0\rangle$, as constructed in Lemma \ref{lem:mtots}, can be decomposed into a set of  TSs $\set{T_k}_k$ for $k\in \{1,n\}$, $s.t.\ T\sim_0 (T_1\|\dots\| T_n)$.
%, given that
% $Y = \bcup{k}{}{Y_k}$ , $X$ is the set of aggregate observations of $\bigcup_k X_k$, $O$ is the set of aggregate actuation signals of $\bigcup_k O_k$ as defined in Sect.~\ref{sec:over}, and $\listen(s)=\emptyset$ for all $s\in S$.% and \rulename{Act} is restricted to send actions. That is, $\rulename{Act}\subseteq(\schan\times\set{!}\times\Upsilon)$.
\end{lemma}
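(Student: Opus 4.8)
The plan is to obtain each $T_k$ as a projection of $T$ onto the interface $\conf{X_k, Y_k, O_k}$, to fix the listening functions so that a single run of $T$ is mirrored step for step by a run of $T_1 \| \dots \| T_n$ in which all components stay aligned on a common successor, and then to present this ``diagonal'' as an explicit bisimulation establishing $\sim_0$. Two observations simplify the matching. First, $T$ is deterministic and has a unique initial transition (Lemma~\ref{lem:mtots}). Second, under the empty parameter $\mathcal{O}$ every action is permitted, i.e. $0 \rTo{a} 0$ holds for all $a$ via case~(ii) of the permission relation (since $\listen(0) = \emptyset$), so all three clauses of Def.~\ref{def:bisim} are instantiated with $\epsilon = \epsilon' = 0$.

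For the construction I would keep $S_k = S$ with initial state $s_0$ and project the labelling: writing $L(s) = (\id, o)$, I set $L_k(s)$ so that $\projkx(L_k(s)) = \id(k)$ when $\id \in X$ and $k$ observes, $\projkch(L_k(s)) = \id$ when $\id \in Y_k$, and $\bot$ otherwise, paired with $\projko(L_k(s)) = o(k)$. Each move of $T$ is then split by ownership. A transition carrying $f(s') = (\id, !, \emptyset)$ with $\id \in Y_k$ becomes a send in $\Delta_k$ and, by Property~\ref{pr:brd}, a matching receive in every $\Delta_j$ listening to $\id$; a transition carrying $f(s') = (\mathsf{ids}, !, \id)$ with $\id \in X$ becomes a send in $\Delta_{k^*}$ for one fixed $k^* \in \mathsf{ids}$ together with matching receives in the remaining components. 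I take $\listen^k(s)$ to be exactly the channels on which $T_k$ receives in order to track each move of $T$ from $s$ — the channels owned by other agents and the observation channels for which $k$ is not the designated sender — and $\Delta^k_e$ to be the projection of $\Delta_e$ to the letters $T_k$ perceives. Since $S_k = S$ and $M$ is deterministic, every component's target is forced, so the whole tuple advances to the common successor $s'$.

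Then I would verify that the symmetric closure of $\mathcal{R}_0 = \set{(s, (s, \dots, s)) : s \in S}$ establishes $\sim_0$. Because the composition's $\Delta$ (Def.~\ref{def:comp}) and $T$'s $\Delta$ (Lemma~\ref{lem:mtots}) contain only send transitions, clauses~2 and~3 of Def.~\ref{def:bisim} are vacuous on both sides, and only labels and sends remain. Labels agree because the union rule of Def.~\ref{def:comp} reassembles $(\id, o)$ from the pieces $L_k(s)$: on an observation step the observers supply $\id(k)$, so $x = \id$ and $L = (x, o)$; on a message step the channel owner supplies $\chan = \id$ while $x = \emptyset$, so $L = (\chan, o)$; and in both cases every component supplies its output $o(k)$. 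Clause~1 holds because each send of $T$ is realised by its designated owner together with the synchronising receives, keeping the tuple on the diagonal, and conversely every composition send has a single owner whose action is exactly that of $T$. Hence $\mathcal{R}_0$ is closed and relates the initial states, giving $T \sim_0 (T_1 \| \dots \| T_n)$.

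The hard part will be the aggregate-observation step. Unlike a channel in $Y$, the letter $\id \in X$ is perceived jointly by the whole set $\mathsf{ids}$, yet in $T$ it is the single multicast $(\mathsf{ids}, !, \id)$ carrying all of $\id$. I must justify realising it as one composition send on $\mathsf{ids}$: the designated $k^*$ supplies the action $(\mathsf{ids}, !, \id)$ (legitimately, since the move $s \to s'$ already fixes $\id$ and, by Property~\ref{pr:brd}, sends are autonomous), each other observer both contributes its local $\id(j)$ to the reconstructed $x$ and synchronises through its receive on $\mathsf{ids}$, while the non-observers advance by listening to $\mathsf{ids}$ and contributing $\bot$ to $x$. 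The bookkeeping lies in checking that this yields exactly one value $\chan$ as required in Def.~\ref{def:comp}, that the reconstructed label is precisely $(\id, o)$, and that every component lands on the same successor $s'$; once that is in place, the remaining obligations reduce to the monoid structure of $\|$ and the determinism inherited from $M$.
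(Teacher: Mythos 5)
Your proposal is correct and takes essentially the same route as the paper's proof: replicate $T$'s state space in every $T_k$, split each transition of $T$ into one send plus matching receives according to channel/observation ownership, let $\listen^k$ be exactly the channels received on (so Property~\ref{pr:brd} holds), and conclude $T\sim_0(T_1\|\dots\|T_n)$ because the composition is isomorphic to $T$ on the diagonal and, under the empty parameter, only the send clause of Def.~\ref{def:bisim} is non-vacuous. The one (harmless) deviation is that for an aggregate observation on channel $\mathsf{ids}$ you fix a single designated sender $k^*$, whereas the paper gives every $k\in\mathsf{ids}$ both the send and a matching self-receive; both choices produce the same composed transitions on the diagonal, so the argument goes through unchanged.
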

\begin{proof} We construct the components of each $T_k$ as follows:
\begin{itemize}[label={$\bullet$}, topsep=2pt, itemsep=2pt, leftmargin=15pt]
\item
  $S_k = {S}$, \,
  $s_0^k = { s_0}$

\item For each $s\in S$, $(s,(c,!,\upsilon),s')\in \Delta$, and each $T_k$, we have that
\begin{enumerate}
\item if  $k\in c$ and $|c|=1$ then  $(s,(c,!,\upsilon),s')\in \Delta_k$
\item if  $k\in c$ and $|c|>1$ then  $(s,(c,!,\upsilon),s')\in \Delta_k$ and $(s,(c,?,\upsilon),s')\in \Delta_k$
\item if  $y= c$ for some $y\in Y_k$  then  $(s,(c,!,\upsilon),s')\in \Delta_k$
\item otherwise  $(s,(c,?,\upsilon),s')\in \Delta_k$
\end{enumerate}

\item $\Delta^k_e =
\{(s,\id) ~|~
(s,(c,!,\upsilon),s')\in \Delta_k,\, L_k(s')=(i,o)\}
$
 \item $\rulename{Act}_k = \set{a~|~(s,
	a,
	s')\in\Delta_k}$
\item
  $L_k(s) = \mathsf{\bf proj}_k(L(s))$, i.e., the projection of $L(s)$ on Agent$_k$
\item
  $\listen^k(s) = \set{c~|~(s,
	(c,?,\upsilon),
	s')\in\Delta_k}$
\end{itemize}
It is sufficient to prove that $s_0$ of $T$ is equivalent to $s'_0$ of the composition $(T'=T_1\|\dots\| T_n)$ under the empty parameter state $0$, and that each $T_k$ is indeed a shadow TS. The key idea of the proof is that the construction creates isomorphic shadow TSs that are fully synchronous. That is, they have the same states and transition structure, and only differ in state labelling, listening function, and transition role (send $!$ or receive $?$). Thus, every send  in $\Delta$ of $T$ is divided into a set of  one (or more sends) $!$ and exactly $(n-1)$-receives $?$. In case of more than one send (as in item $(2)$), then for each $T_k$ that implements such send there must be a matching receive  with same source and target states. By Def.~\ref{def:comp}, we can reconstruct $T'$ that is isomorphic to $T$ with $\listen(s)=\schan$ for all $s\in S$. However, we only need to prove $T\sim_0 T'$ under the $\mathcal{O}$-parameter which cannot interact, and thus $\listen(s)$ is not important after constructing $T'$.
\end{proof}

Lemma~\ref{lem:decomp} provides an upper bound on number of
communications each agent $T_k$ must participate in within the team. We
will use the {$\mathcal{E}$-Bisimulation} in Sect.~\ref{sec:bisim} to
reduce such number with respect to the rest of the composition (or
team). That is, given a set $K=\set{1,\dots,n}$ of agents and for each
agent$_k$, we minimise the corresponding $T_k$ with respect to the
parameter $T=\|_{j\in(K\backslash \set{k})}T_j$. Note that
Lemma~\ref{lem:decomp} produces agents that are of the same size of the
deterministic specification.
Size wise, we recall that in Zielonka
synthesis~\cite{Zielonka87,GenestGMW10})
agents are exponential in the size of the
deterministic specification.
We will reduce this size even more by constructing the quotient shadow TS of $T_k$:
\begin{definition}[Quotient Shadow TS]\label{def:quo}
For a shadow TS ${T_k} = \langle S_k,\ \rulename{Int}_k,\
\rulename{Act}_k, \Delta^k_e, \ \Delta_k,\ L_k,\ \listen^k,\
s^k_0\rangle$, $\mathcal{E}$ the state-space of
$\displaystyle\mathop{\parallel}_{j\neq
k}T_j$, and $\mathcal{E}$-bisimulation, the quotient shadow TS
$[T_k]_{\sim}=\langle S'_k,\rulename{Int}'_k$,$\rulename{Act}'_k$,
$\Delta^{k'}_e, \Delta'_k,L_k,\listen^{k'},s^{k'}_0\rangle$
\begin{itemize}[label={$\bullet$}, topsep=4pt, itemsep=2pt, leftmargin=15pt]
\item $S'_k=\set{s_{\sim}~|~ s\in S_k}$ with $s_{\sim}=\set{s'\in S_k~|~ s\sim_e s'\ \mbox{for }\break e\in\mathcal{E} }$

\item $s^{k'}_0={s^k_0}_{\sim}$

\item $\Delta'_k=$

$\qquad
\begin{array}{l}
\left \{\left (
\begin{array}{c}
	s_{\sim},(c,!,\upsilon),s'_{\sim}
\end{array} \right )
\left |~
\begin{array}{l}
 (s,(c,!,\upsilon),s')\in\Delta_k
\end{array}
\right .
\right	\}\cup\\[2ex]

\left \{\left (
\begin{array}{c}
	s_{\sim},(c,?,\upsilon),s'_{\sim}
\end{array} \right )
\left |~
\begin{array}{l}
s_{\sim}\neq s'_{\sim},\
 (s,(c,?,\upsilon),s')\in\Delta_k
\end{array}
\right .
\right	\}
\end{array}
$
\item $\Delta^{k'}_e =
\{(s,\id) ~|~
(s,(c,!,\upsilon),s')\in \Delta'_k,\, L_k(s')=(i,o)\}
$
 \item $\rulename{Act}'_k = \set{a~|~(s,
	a,
	s')\in\Delta'_k}$
\item
  $\listen^{k'}(s) = \set{c~|~(s,
	(c,?,\upsilon),
	s')\in\Delta'_k}$
%\item
%  $Y^{k'} = \set{c~|~(s,
%	(c,!,\upsilon),s'),
%	s')\in\Delta'_k}$
\end{itemize}
\end{definition}

The definition of {$\mathcal{E}$-Bisimulation} in Sect.~\ref{sec:bisim} can be easily converted to an algorithm (see~\cite{paigetrajan}) that efficiently (almost linear) computes {$\mathcal{E}$-Bisimulation} as the largest fixed point.
% of a monotonic function, ordering the set of binary relations over states of a TS with respect to set inclusion.
%Although, we are computing the coarsest partitions of states while eliminating unnecessary interactions, it is not obvious if this puts a lower bound on the number of interactions each agent must participate in within the team while behaving correctly. It would be interesting to investigate this further.
% We know how to use {$\mathcal{E}$-Bisimulation} to reduce the state space of an agent (not interactions) in this setting even more (by studying other properties of  {$\mathcal{E}$-Bisimulation}). However, this is out of the scope of this paper.

\subsubsection*{\bf\large Scenario Revisited}\label{sec:revisit}
%\ya{revisit the scenario, and discuss confusion to supervisory control e +spec+cont. This is a local broadcast, not multicast. The aggregate assignments changes coherently of O and X. If you don't want you can drop Y and thus is a normal rs but with a price}
\begin{figure}[t]
\begin{tabular}{l}
$
\begin{array}{l}\\
\mbox{\rom{1} $T_1$: The Tray Arm Agent}\\[2ex]
\includegraphics[scale=.4]{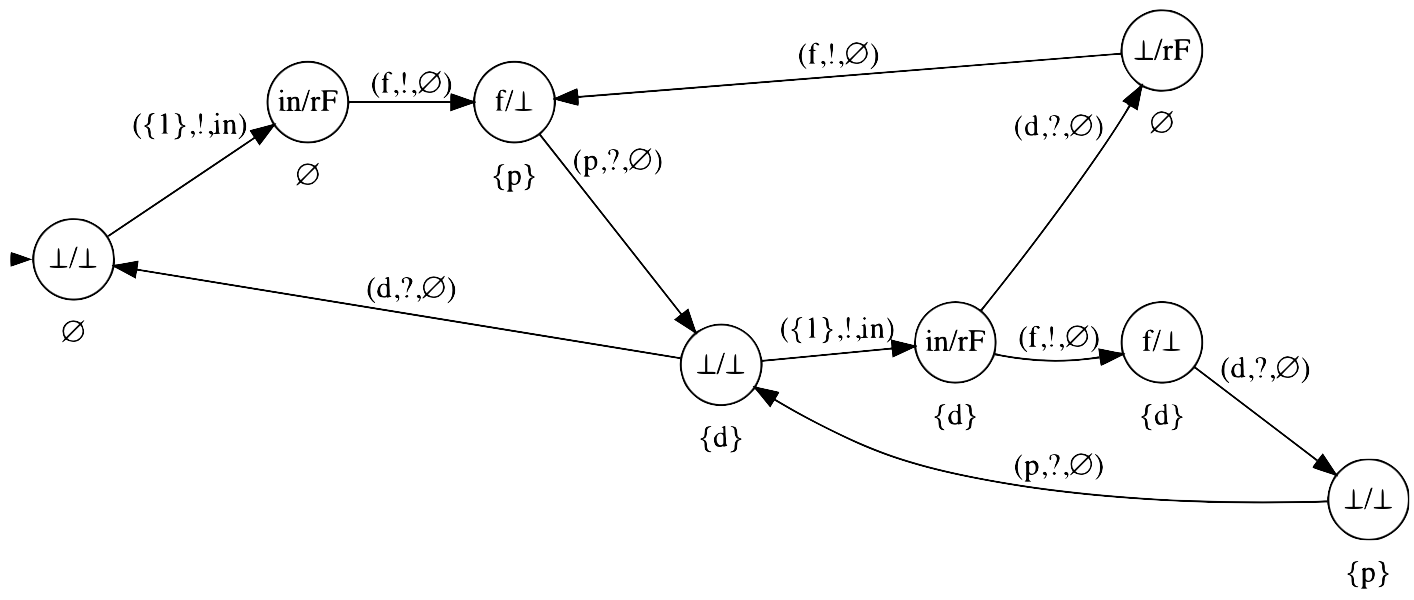}\\
\end{array}$ \qquad

$
\begin{array}{l}
\mbox{\rom{2} $T_2$: The processing Arm Agent}\\[2ex]
  \includegraphics[scale=.3]{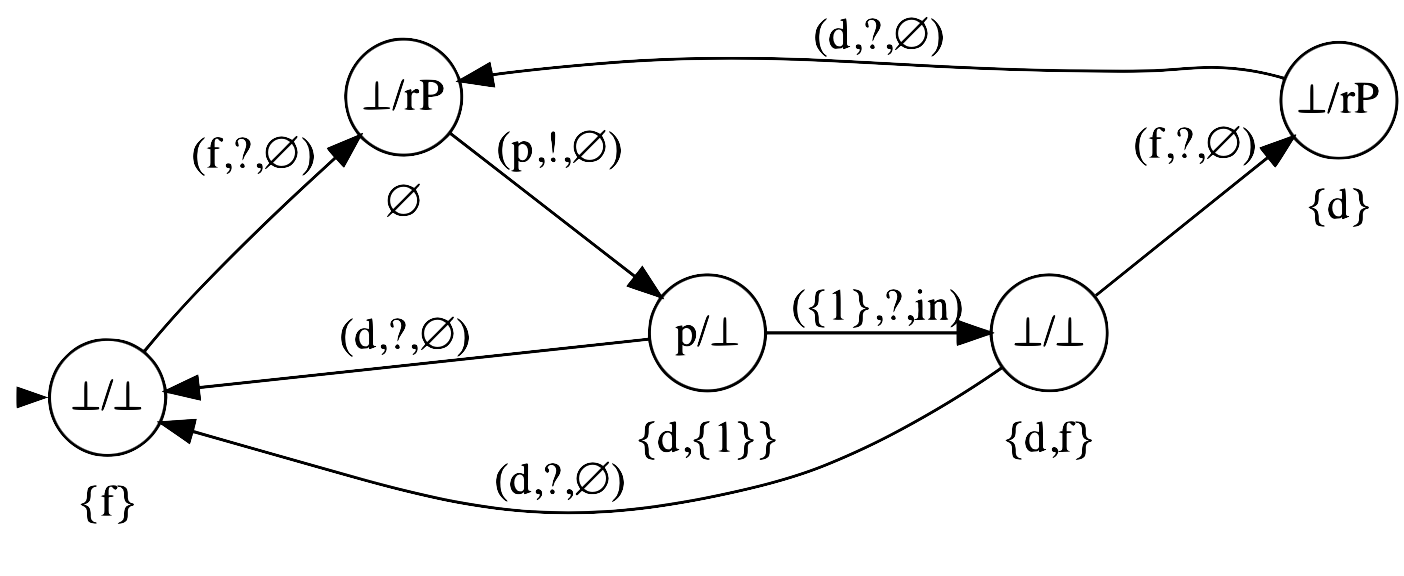}\\
\end{array}$ \\

$
\begin{array}{l}
\mbox{\rom{3} $T_3$: The Packaging Arm Agent}\\[2ex]
  \includegraphics[scale=.3]{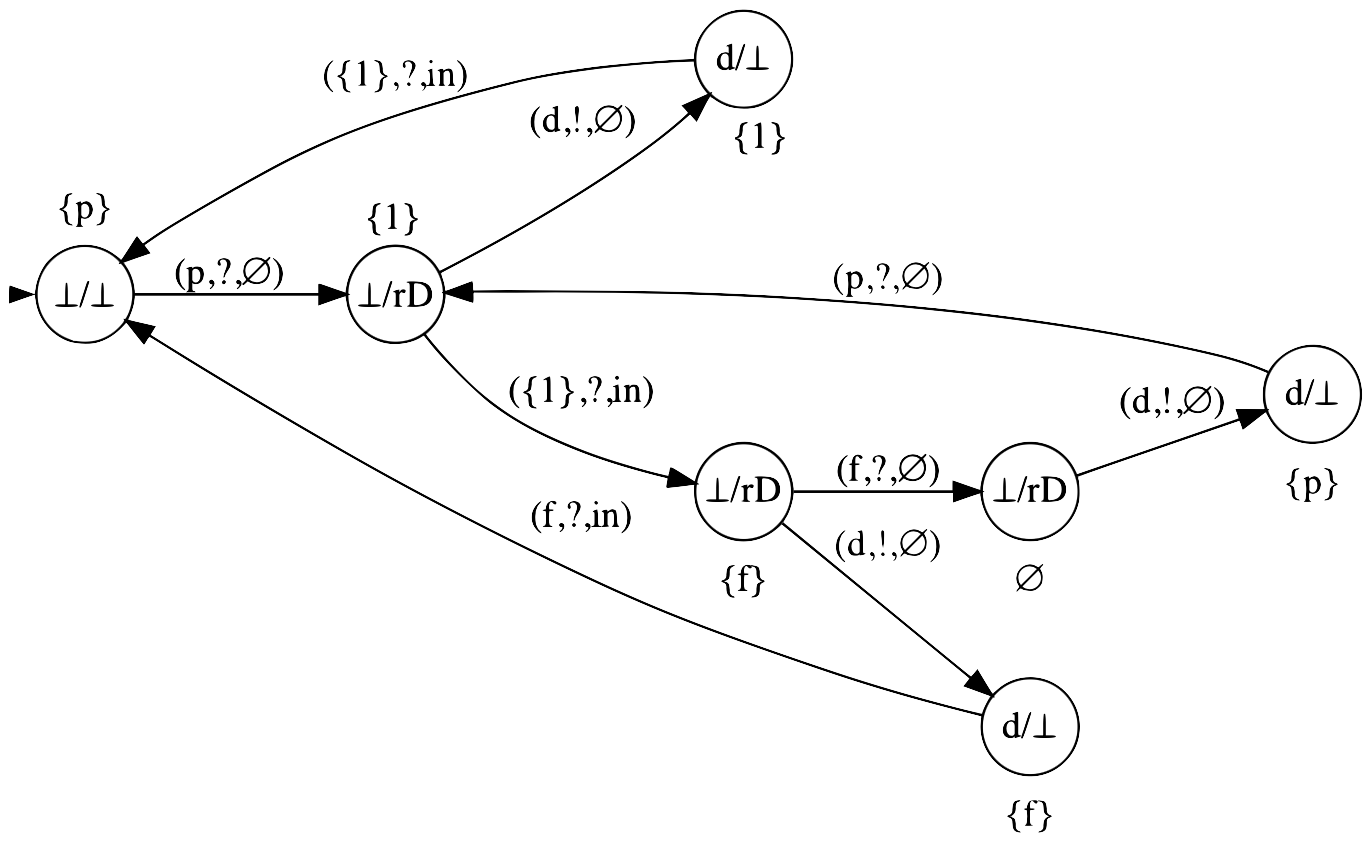}

\end{array}$

\end{tabular}
\caption{The distributed Realisation of the Product line}
\label{fig:drlise}
\end{figure}

The distributed realisation of the teamwork synthesis instance, in Sect.~\ref{sec:scen}, is depicted in Fig.~\ref{fig:drlise}, where each arm is supplied with a shadow TS that represents its correct behaviour. For a shortcut, we only use the \qt{first letter} of a channel name $y\in Y_k$ and the \qt{first two letters} of an output letter name $o\in O_k$ in the figures, e.g., we use the shortcuts $\mathsf{f}$ for $\mathsf{fwd}$, $\mathsf{rF}$ for $\mathsf{rFwd}$, etc.

 Note that every state of $T_k$ for $k\in\set{1,2,3}$ is labelled with input/output letter and a set of channels that $T_k$ listens to in this state, e.g., $T_1$ in Fig.~\ref{fig:drlise}\rom{1} initially has empty input/output letter (the initial state is labelled with $\bot/\bot$) and is not listening to any channel (the listening function is initially $\emptyset$). Moreover, $T_2$ and $T_3$ in Fig.~\ref{fig:drlise}\rom{1} and \rom{2} are initially listening to channel $\mathsf{f}$ and respectively $\mathsf{p}$.

 Transitions are labelled with either message send $(!)$ or receive $(?)$. For instance, $T_1$ can initially send the message $(\set{1},!,\msf{in})$ on channel $\set{1}$ \emph{independently and move alone} to the next state in which $T_1$ reads the letter $(\msf{in})$ from the input-tray, and consequently signals its motor to get ready to forward, i.e., by signalling $(\msf{rF})$. Recall that this is a shadow transition of the potential trigger of the environment from the initial state. This is akin to say that once $T_1$ senses a trigger from the environment, it immediately permits it by providing a shadowing transition. Note that $T_2$ and $T_3$ are initially busy waiting to receive a message on $\msf{f}$ and respectively on $\msf{p}$ to kick start their executions.

Clearly, message $(\set{1},!,\msf{in})$ on channel $\set{1}$ is a strategic interaction that our synthesis engine added to ensure distributed realisability. Notice that $T_2$ and $T_3$ do not initially listen to $\set{1}$ and they cannot observe the interaction on it, but later they will listen to it when they need (e.g., after $(\msf{p},!,\emptyset)$).

By composition, as defined in Def.~\ref{def:comp}, initially $T_1$ move independently, and from the next state $T_1$ sends the message $(\msf{f},!,\emptyset)$ in which $T_2$ participates while $T_3$ stays disconnected. Indeed, $T_3$ only gets involved in the third step. Notice how the listening functions of these TSs change dynamically during execution, and allowing for loosely coupled distributed implementation. The latter has a feature that in every execution step one can send a message, and the others are either involved (i.e., they receive) or they cannot observe it (i.e., they do not listen).

Recall that state labels are the elements of executions and the transition labels are complimented by the synthesis engine to ensure distributed realisability. As one can see, all TSs initially start from states that satisfy the initial condition $\theta$. Indeed, there is no signal initially enabled. Moreover, the composite labelling of states in future execution steps satisfies the formula $\varphi$ under the environment model $E$ and the execution assumption $A$.

Note that the machines in Fig.~\ref{fig:drlise} is everything we need. That is, unlike supervisory control~\cite{ramadge89} where the centralised controller is finally composed with the environment model, and the composition is checked against the goal, we do not  have  such requirement. Indeed, the machines in Fig.~\ref{fig:drlise} fully distribute the control.

The results in this paper are unique, and aspire to unlock distributed synthesis for multi-agent systems for the first time.

% !TEX root = ../aamas23.tex
\section{Concluding Remarks}\label{sec:conc}
We introduced teamwork synthesis which reformulates the original distributed synthesis problem ~\cite{PnueliR90,FinkbeinerS05} and casts it on teamwork multi-agent systems. Our synthesis technique relies on a flexible coordination model, named Shadow TS, that allow agents to co-exist and interact based on need, and thus limits the interaction to interested agents (or agents that require information to proceed). 

Unlike the existing distributed synthesis problems, our  formulation is decidable, and can be reduced to a single-agent synthesis. We efficiently decompose the solution of the latter and minimise it for individual agents using a novel notion of parametric bisimulation. We minimise both the state space and the set of interactions each agent requires to fulfil its goals. The rationale behind teamwork synthesis is that we reformulate the original synthesis question by dropping the fixed interaction architecture among agents as input to the problem. Instead, our synthesis engine tries to realise the goal given the initial specifications; otherwise it automatically introduces minimal interactions among agents to ensure distributed realisability. Teamwork synthesis shows algorithmically how agents should interact so that each is well-informed and fulfils its goal.

\paragraph{\bf Related works} We report on related works with regards to concurrency models used for distributed synthesis, bisimulation relations, and also other formulations of distributed synthesis.

Shadow TS adopts the reconfigurable semantics approach from CTS~\cite{AbdAlrahmanP21,rchk,rcp}, but it is actually weaker in terms of synchronisation. Indeed, the requirement in Property~\ref{pr:brd} lifts out the blocking nature of multicast, and thus the semantics of the Shadow TS is reduced to a \emph{local broadcast}, (cf.~\cite{info19,scp20,forte18,DABL14,DAlG20,forte16,AMP22}). That is, message sending can no longer be blocked, and is broadcasted on local channels rather than a unique public channel $\star$ (broadcast to all) as in  CTS~\cite{AbdAlrahmanP21}. The advantage is that the semantics of Shadow TS is asynchronous and no agent can force other agents to wait for it. It is definitely weaker than shared memory models as in~\cite{PnueliR90,FinkbeinerS05} and it is also weaker than the synchronous automata of supervision~\cite{ramadge89} and Zielonka automata~\cite{Zielonka87,GenestGMW10}. Note that the last two adopt the multi-way synchronisation (or blocking rendezvous) of Hoare's CSP calculus~\cite{Hoare21a}. Thus, the synchronisation dependencies are lifted out in our model. Intuitively, an agent can, at most, block itself to wait for a message from another agent, but in no way can block the executions of others unwillingly.

Our notion of bisimulation in Def.~\ref{def:bisim} is novel with respect to existing literatures on bisimulation~\cite{CastellaniH89,MilnerS92,Sangiori93}. To the best of our knowledge, it is the only bisimulation that is able to abstract actual messages, and thus reduce synchronisations. It treats receive transitions in a sophisticated way that allows it to judge when a receive or a discard transition can be abstracted safely. It has a branching nature like in~\cite{GlabbeekW96}, but is stronger because the former cannot distinguish different $\tau$ transitions. It is parametric like in~\cite{Larsen87}, but is weaker in that it can abstract actual receive transitions.

When it comes to distributed synthesis, there is a plethora of formulations. Here, we only relate to the ones that consider hostile environments. These are: Distributed synthesis~\cite{PnueliR90,FinkbeinerS05},
Zielonka
synthesis~\cite{Zielonka87,GenestGMW10}, and Decentralised supervision~
\cite{Thistle05}. Unlike teamwork synthesis,  all are, in general, 
undecidable except for specific configurations (mostly with a tower of exponentials~\cite{KupfermanV01,MadhusudanT01}). Zielonka synthesis
 is decidable if synchronising agents are allowed to share their 
 entire state, and this produces agents that are exponential 
 in the size of the joint deterministic specification. Teamwork synthesis produces agents that are, in the worst case, the size of the joint deterministic specification.

\paragraph{\bf Future works}
We want to generalise the execution assumption $A$ of teamwork synthesis depicted in Fig.~\ref{fig:execass} to a more balanced scheduling between interaction events $Y$ and context  events $X$, inspired by RTC control~\cite{ABDPU21}. That is, we want to provide a more relaxed built-in transfer of control between the interaction and the context events. The latter would majorly simplify writing specifications. We want also to extend the Shadow TS to allow multithreaded agents, and thus eliminates interaction among co-located threads. 

Clearly, the positive results in this paper makes it feasible to provide tool support for Teamwork synthesis, and with a more user-friendly interface.

%Lastly, we want to study teamwork synthesis for logics, commonly used for multi-agent systems, e.g., with Knowledge support
%\cite{FHMV95} and epistemic operators \cite{GochetG06}, strategic reasoning~\cite{CHP10,MMPV14}, etc.

%
% ---- Bibliography ----
%
% BibTeX users should specify bibliography style 'splncs04'.
% References will then be sorted and formatted in the correct style.
%
\bibliographystyle{splncs04}
\bibliography{biblio}

\begin{thebibliography}{10}
\providecommand{\url}[1]{\texttt{#1}}
\providecommand{\urlprefix}{URL }
\providecommand{\doi}[1]{https://doi.org/#1}

\bibitem{ABDPU21}
{Abd Alrahman}, Y., Braberman, V.A., D'Ippolito, N., Piterman, N., Uchitel, S.:
  Synthesis of run-to-completion controllers for discrete event systems. In:
  2021 American Control Conference, {ACC} 2021, New Orleans, LA, USA, May
  25-28, 2021. pp. 4892--4899. {IEEE} (2021).
  \doi{10.23919/ACC50511.2021.9482704},
  \url{https://doi.org/10.23919/ACC50511.2021.9482704}

\bibitem{forte18}
{Abd Alrahman}, Y., {De Nicola}, R., Garbi, G., Loreti, M.: A distributed
  coordination infrastructure for attribute-based interaction. In: Formal
  Techniques for Distributed Objects, Components, and Systems - 38th {IFIP}
  {WG} 6.1 International Conference, {FORTE} 2018, Held as Part of the 13th
  International Federated Conference on Distributed Computing Techniques,
  DisCoTec 2018, Madrid, Spain, June 18-21, 2018, Proceedings. pp. 1--20
  (2018). \doi{10.1007/978-3-319-92612-4\_1}

\bibitem{forte16}
{Abd Alrahman}, Y., {De Nicola}, R., Loreti, M.: On the power of
  attribute-based communication. In: Formal Techniques for Distributed Objects,
  Components, and Systems - 36th {IFIP} {WG} 6.1 International Conference,
  {FORTE} 2016, Held as Part of the 11th International Federated Conference on
  Distributed Computing Techniques, DisCoTec 2016, Heraklion, Crete, Greece,
  June 6-9, 2016, Proceedings. pp. 1--18. Springer (2016).
  \doi{10.1007/978-3-319-39570-8\_1}

\bibitem{info19}
{Abd Alrahman}, Y., {De Nicola}, R., Loreti, M.: A calculus for
  collective-adaptive systems and its behavioural theory. Inf. Comput.
  \textbf{268} (2019). \doi{10.1016/j.ic.2019.104457}

\bibitem{rcp}
{Abd Alrahman}, Y., Perelli, G., Piterman, N.: Reconfigurable interaction for
  {MAS} modelling. In: Seghrouchni, A.E.F., Sukthankar, G., An, B.,
  Yorke{-}Smith, N. (eds.) Proceedings of the 19th International Conference on
  Autonomous Agents and Multiagent Systems, {AAMAS} '20, Auckland, New Zealand,
  May 9-13, 2020. pp. 7--15. International Foundation for Autonomous Agents and
  Multiagent Systems (2020)

\bibitem{AbdAlrahmanP21}
{Abd Alrahman}, Y., Piterman, N.: Modelling and verification of reconfigurable
  multi-agent systems. Auton. Agents Multi Agent Syst.  \textbf{35}(2), ~47
  (2021). \doi{10.1007/s10458-021-09521-x},
  \url{https://doi.org/10.1007/s10458-021-09521-x}

\bibitem{DABL14}
Alrahman, Y.A., Andric, M., Beggiato, A., Lluch{-}Lafuente, A.: Can we
  efficiently check concurrent programs under relaxed memory models in maude?
  In: Escobar, S. (ed.) Rewriting Logic and Its Applications - 10th
  International Workshop, {WRLA} 2014, Held as a Satellite Event of ETAPS,
  Grenoble, France, April 5-6, 2014, Revised Selected Papers. Lecture Notes in
  Computer Science, vol.~8663, pp. 21--41. Springer (2014).
  \doi{10.1007/978-3-319-12904-4\_2},
  \url{https://doi.org/10.1007/978-3-319-12904-4\_2}

\bibitem{rchk}
Alrahman, Y.A., Azzopardi, S., Piterman, N.: R-check: A model checker for
  verifying reconfigurable mas (2022)

\bibitem{DAlG20}
Alrahman, Y.A., Garbi, G.: A distributed {API} for coordinating abc programs.
  Int. J. Softw. Tools Technol. Transf.  \textbf{22}(4),  477--496 (2020).
  \doi{10.1007/s10009-020-00553-4},
  \url{https://doi.org/10.1007/s10009-020-00553-4}

\bibitem{AMP22}
Alrahman, Y.A., Martel, M., Piterman, N.: A {PO} characterisation of
  reconfiguration. In: Seidl, H., Liu, Z., Pasareanu, C.S. (eds.) Theoretical
  Aspects of Computing - {ICTAC} 2022 - 19th International Colloquium, Tbilisi,
  Georgia, September 27-29, 2022, Proceedings. Lecture Notes in Computer
  Science, vol. 13572, pp. 42--59. Springer (2022).
  \doi{10.1007/978-3-031-17715-6\_5},
  \url{https://doi.org/10.1007/978-3-031-17715-6\_5}

\bibitem{scp20}
Alrahman, Y.A., Nicola, R.D., Loreti, M.: Programming interactions in
  collective adaptive systems by relying on attribute-based communication. Sci.
  Comput. Program.  \textbf{192},  102428 (2020).
  \doi{10.1016/j.scico.2020.102428},
  \url{https://doi.org/10.1016/j.scico.2020.102428}

\bibitem{nsyn}
Bloem, R., Jobstmann, B., Piterman, N., Pnueli, A., Sa'ar, Y.: Synthesis of
  reactive(1) designs. J. Comput. Syst. Sci.  \textbf{78}(3),  911--938 (2012).
  \doi{10.1016/j.jcss.2011.08.007}

\bibitem{CastellaniH89}
Castellani, I., Hennessy, M.: Distributed bisimulations. J. {ACM}
  \textbf{36}(4),  887--911 (1989). \doi{10.1145/76359.76369},
  \url{https://doi.org/10.1145/76359.76369}

\bibitem{EhlersKB15}
Ehlers, R., K{\"{o}}nighofer, R., Bloem, R.: Synthesizing cooperative reactive
  mission plans. In: 2015 {IEEE/RSJ} International Conference on Intelligent
  Robots and Systems, {IROS} 2015, Hamburg, Germany, September 28 - October 2,
  2015. pp. 3478--3485. {IEEE} (2015). \doi{10.1109/IROS.2015.7353862},
  \url{https://doi.org/10.1109/IROS.2015.7353862}

\bibitem{FinkbeinerS05}
Finkbeiner, B., Schewe, S.: Uniform distributed synthesis. In: 20th {IEEE}
  Symposium on Logic in Computer Science {(LICS} 2005), 26-29 June 2005,
  Chicago, IL, USA, Proceedings. pp. 321--330. {IEEE} Computer Society (2005).
  \doi{10.1109/LICS.2005.53}, \url{https://doi.org/10.1109/LICS.2005.53}

\bibitem{GenestGMW10}
Genest, B., Gimbert, H., Muscholl, A., Walukiewicz, I.: Optimal zielonka-type
  construction of deterministic asynchronous automata. In: Abramsky, S.,
  Gavoille, C., Kirchner, C., auf~der Heide, F.M., Spirakis, P.G. (eds.)
  Automata, Languages and Programming, 37th International Colloquium, {ICALP}
  2010, Bordeaux, France, July 6-10, 2010, Proceedings, Part {II}. Lecture
  Notes in Computer Science, vol.~6199, pp. 52--63. Springer (2010).
  \doi{10.1007/978-3-642-14162-1\_5},
  \url{https://doi.org/10.1007/978-3-642-14162-1\_5}

\bibitem{GlabbeekW96}
van Glabbeek, R.J., Weijland, W.P.: Branching time and abstraction in
  bisimulation semantics. J. {ACM}  \textbf{43}(3),  555--600 (1996).
  \doi{10.1145/233551.233556}, \url{https://doi.org/10.1145/233551.233556}

\bibitem{Hoare21a}
Hoare, C.A.R.: Communicating sequential processes. In: Jones, C.B., Misra, J.
  (eds.) Theories of Programming: The Life and Works of Tony Hoare, pp.
  157--186. {ACM} / Morgan {\&} Claypool (2021). \doi{10.1145/3477355.3477364},
  \url{https://doi.org/10.1145/3477355.3477364}

\bibitem{JonesKPL12}
Jones, A.V., Knapik, M., Penczek, W., Lomuscio, A.: Group synthesis for
  parametric temporal-epistemic logic. In: van~der Hoek, W., Padgham, L.,
  Conitzer, V., Winikoff, M. (eds.) International Conference on Autonomous
  Agents and Multiagent Systems, {AAMAS} 2012, Valencia, Spain, June 4-8, 2012
  {(3} Volumes). pp. 1107--1114. {IFAAMAS} (2012),
  \url{http://dl.acm.org/citation.cfm?id=2343855}

\bibitem{KupfermanS22}
Kupferman, O., Shenwald, N.: The complexity of {LTL} rational synthesis. In:
  Fisman, D., Rosu, G. (eds.) Tools and Algorithms for the Construction and
  Analysis of Systems - 28th International Conference, {TACAS} 2022, Held as
  Part of the European Joint Conferences on Theory and Practice of Software,
  {ETAPS} 2022, Munich, Germany, April 2-7, 2022, Proceedings, Part {I}.
  Lecture Notes in Computer Science, vol. 13243, pp. 25--45. Springer (2022).
  \doi{10.1007/978-3-030-99524-9\_2},
  \url{https://doi.org/10.1007/978-3-030-99524-9\_2}

\bibitem{KupfermanV01}
Kupferman, O., Vardi, M.Y.: Synthesizing distributed systems. In: 16th Annual
  {IEEE} Symposium on Logic in Computer Science, Boston, Massachusetts, USA,
  June 16-19, 2001, Proceedings. pp. 389--398. {IEEE} Computer Society (2001).
  \doi{10.1109/LICS.2001.932514},
  \url{https://doi.org/10.1109/LICS.2001.932514}

\bibitem{Larsen87}
Larsen, K.G.: A context dependent equivalence between processes. Theor. Comput.
  Sci.  \textbf{49},  184--215 (1987). \doi{10.1016/0304-3975(87)90007-7},
  \url{https://doi.org/10.1016/0304-3975(87)90007-7}

\bibitem{MadhusudanT01}
Madhusudan, P., Thiagarajan, P.S.: Distributed controller synthesis for local
  specifications. In: Orejas, F., Spirakis, P.G., van Leeuwen, J. (eds.)
  Automata, Languages and Programming, 28th International Colloquium, {ICALP}
  2001, Crete, Greece, July 8-12, 2001, Proceedings. Lecture Notes in Computer
  Science, vol.~2076, pp. 396--407. Springer (2001).
  \doi{10.1007/3-540-48224-5\_33},
  \url{https://doi.org/10.1007/3-540-48224-5\_33}

\bibitem{MajumdarPS19}
Majumdar, R., Piterman, N., Schmuck, A.: Environmentally-friendly {GR(1)}
  synthesis. In: Vojnar, T., Zhang, L. (eds.) Tools and Algorithms for the
  Construction and Analysis of Systems - 25th International Conference, {TACAS}
  2019, Held as Part of the European Joint Conferences on Theory and Practice
  of Software, {ETAPS} 2019, Prague, Czech Republic, April 6-11, 2019,
  Proceedings, Part {II}. Lecture Notes in Computer Science, vol. 11428, pp.
  229--246. Springer (2019). \doi{10.1007/978-3-030-17465-1\_13},
  \url{https://doi.org/10.1007/978-3-030-17465-1\_13}

\bibitem{pi2}
Milner, R., Parrow, J., Walker, D.: A calculus of mobile processes, {II}. Inf.
  Comput.  \textbf{100}(1),  41--77 (1992). \doi{10.1016/0890-5401(92)90009-5}

\bibitem{MilnerS92}
Milner, R., Sangiorgi, D.: Barbed bisimulation. In: Kuich, W. (ed.) Automata,
  Languages and Programming, 19th International Colloquium, ICALP92, Vienna,
  Austria, July 13-17, 1992, Proceedings. Lecture Notes in Computer Science,
  vol.~623, pp. 685--695. Springer (1992). \doi{10.1007/3-540-55719-9\_114},
  \url{https://doi.org/10.1007/3-540-55719-9\_114}

\bibitem{Muscholl15}
Muscholl, A.: Automated synthesis of distributed controllers. In:
  Halld{\'{o}}rsson, M.M., Iwama, K., Kobayashi, N., Speckmann, B. (eds.)
  Automata, Languages, and Programming - 42nd International Colloquium, {ICALP}
  2015, Kyoto, Japan, July 6-10, 2015, Proceedings, Part {II}. Lecture Notes in
  Computer Science, vol.~9135, pp. 11--27. Springer (2015).
  \doi{10.1007/978-3-662-47666-6\_2},
  \url{https://doi.org/10.1007/978-3-662-47666-6\_2}

\bibitem{NairTM05}
Nair, R., Tambe, M., Marsella, S.: The role of emotions in multiagent teamwork.
  In: Fellous, J., Arbib, M.A. (eds.) Who Needs Emotions? - The brain meets the
  robot, pp. 311--330. Series in affective science, Oxford University Press
  (2005). \doi{10.1093/acprof:oso/9780195166194.003.0011},
  \url{https://doi.org/10.1093/acprof:oso/9780195166194.003.0011}

\bibitem{paigetrajan}
Paige, R., Tarjan, R.E.: Three partition refinement algorithms. SIAM Journal on
  Computing  \textbf{16}(6),  973--989 (1987). \doi{10.1137/0216062},
  \url{https://doi.org/10.1137/0216062}

\bibitem{PitermanPS06}
Piterman, N., Pnueli, A., Sa'ar, Y.: Synthesis of reactive(1) designs. In:
  Emerson, E.A., Namjoshi, K.S. (eds.) Verification, Model Checking, and
  Abstract Interpretation, 7th International Conference, {VMCAI} 2006,
  Charleston, SC, USA, January 8-10, 2006, Proceedings. Lecture Notes in
  Computer Science, vol.~3855, pp. 364--380. Springer (2006).
  \doi{10.1007/11609773\_24}, \url{https://doi.org/10.1007/11609773\_24}

\bibitem{PnueliR89}
Pnueli, A., Rosner, R.: On the synthesis of a reactive module. In: Conference
  Record of the Sixteenth Annual {ACM} Symposium on Principles of Programming
  Languages, Austin, Texas, USA, January 11-13, 1989. pp. 179--190. {ACM} Press
  (1989). \doi{10.1145/75277.75293}, \url{https://doi.org/10.1145/75277.75293}

\bibitem{PnueliR90}
Pnueli, A., Rosner, R.: Distributed reactive systems are hard to synthesize.
  In: 31st Annual Symposium on Foundations of Computer Science, St. Louis,
  Missouri, USA, October 22-24, 1990, Volume {II}. pp. 746--757. {IEEE}
  Computer Society (1990). \doi{10.1109/FSCS.1990.89597},
  \url{https://doi.org/10.1109/FSCS.1990.89597}

\bibitem{PynadathT03}
Pynadath, D.V., Tambe, M.: An automated teamwork infrastructure for
  heterogeneous software agents and humans. Auton. Agents Multi Agent Syst.
  \textbf{7}(1-2),  71--100 (2003). \doi{10.1023/A:1024176820874},
  \url{https://doi.org/10.1023/A:1024176820874}

\bibitem{ramadge89}
Ramadge, P., Wonham, W.: {The control of discrete event systems}. Proceedings
  of the IEEE  \textbf{77}(1),  81--98 (1989). \doi{10.1109/5.21072}

\bibitem{Sangiori93}
Sangiorgi, D.: A theory of bisimulation for the pi-calculus. In: Best, E. (ed.)
  {CONCUR} '93, 4th International Conference on Concurrency Theory, Hildesheim,
  Germany, August 23-26, 1993, Proceedings. Lecture Notes in Computer Science,
  vol.~715, pp. 127--142. Springer (1993). \doi{10.1007/3-540-57208-2\_10},
  \url{https://doi.org/10.1007/3-540-57208-2\_10}

\bibitem{Tambe97}
Tambe, M.: Towards flexible teamwork. J. Artif. Intell. Res.  \textbf{7},
  83--124 (1997). \doi{10.1613/jair.433},
  \url{https://doi.org/10.1613/jair.433}

\bibitem{Thistle05}
Thistle, J.G.: Undecidability in decentralized supervision. Syst. Control.
  Lett.  \textbf{54}(5),  503--509 (2005).
  \doi{10.1016/j.sysconle.2004.10.002},
  \url{https://doi.org/10.1016/j.sysconle.2004.10.002}

\bibitem{Zielonka87}
Zielonka, W.: Notes on finite asynchronous automata. {RAIRO} Theor. Informatics
  Appl.  \textbf{21}(2),  99--135 (1987). \doi{10.1051/ita/1987210200991},
  \url{https://doi.org/10.1051/ita/1987210200991}

\end{thebibliography}
\appendix

\end{document}